\def\BibTeX{{\rm B\kern-.05em{\sc i\kern-.025em b}\kern-.08em
    T\kern-.1667em\lower.7ex\hbox{E}\kern-.125emX}}
\newtheorem{theorem}{Theorem}
\newtheorem{lemma}{Lemma}
\newtheorem{definition}{Definition}
\newtheorem{problem}{Problem}
\newtheorem{corollary}{Corollary}
\newtheorem{proposition}{Proposition}
\newtheorem{remark}{Remark}
\newcommand*{\QEDB}{ } 
\newcommand{\RR}{\mathbb{R}^+}
\def \NN {{\mathbb{N}}}
\newcommand{\ZZ}{\mathbb{Z}}
\newcommand{\TA}{\mathcal{A}}
\newcommand{\taa}{{\em timed automaton}}
\newcommand{\LTS}{\mathcal{L}}
\newcommand{\SVM}{{\bf SVM}}
\newcommand{\CF}{{\tt cf}}
\newcommand{\EUP}{{\tt eup}}
\newcommand{\ELB}{{\tt elb}}
\newcommand{\PR}{{\tt \Theta}}
\newcommand{\pr}{{\tt \theta}}
\newcommand{\pcc}{{\em parametrically constrained clock}}
\newcommand{\Model}[1]{[\![#1]\!]}
\newcommand{\deff}{\  \widehat{=}\ } 
\newcommand{\rev}[1]{#1}
\newcommand{\ly}[1]{#1}
\begin{document}

\title{Parameter Synthesis Problems for  \\ Parametric Timed Automata}

\author{\IEEEauthorblockN{Liyun Dai\quad  Bo Liu\quad Zhiming Liu}
\IEEEauthorblockA{\textit{RISE \&  School of Computer and Information Science, }\\
Southwest University, Chongqing, China \\
$\{$dailiyun,liubocq,zhimingliu88$\}$@swu.edu.cn}
\and
\IEEEauthorblockN{Taolue Chen}
\IEEEauthorblockA{\textit{Department of Computer Science and Information Systems} \\
	\textit{Birkbeck, University of London}\\
taolue@dcs.bbk.ac.uk}
}

\maketitle

 
\begin{abstract}	 
We consider the  {\em parameter synthesis problem} of parametric timed automata (PTAs). The problem is,   given a PTA and a  property, to compute the set of valuations of the parameters under which the resulting timed automaton satisfies the property. Such a set of parameter  valuations is called a {\em feasible region} for the PTA and the property. The problem is known undecidable in general. This paper, however, presents our study on some  decidable sub-classes  of PTAs and proposes efficient parameter synthesis algorithms for them.
Our contribution is  four-fold:  i) the study of the PTAs (called one-one PTAs) with one  parameter and one parametrically constrained clock and an algorithm for computing  the feasible region for a one-one PTA and a property; ii) the study of the PTAs  with lower-bound or with upper-bound parameters only and   a procedure to  construct the feasible region for such a PTA with  one parametrically
constrained clock  and a property;  iii) a theorem showing  that the feasible region for a PTA with  both the lower-bound and upper-bound parameters (i.e. a general L/U PTA) and a property which has existential quantifiers only  is a ``single connected" set; and iv) a support vector machine  based algorithm to identify the boundary  of the  feasible region for a general L/U  PTA and a property. We belief that these results contribute to advancing the theoretical investigations of the parameter synthesis problem for PTAs, and support to exploit machine learning methods to give potentially  more practical synthesis algorithms as well.
\end{abstract}

\begin{IEEEkeywords}
Parametric timed automata, Labelled transition systems, Timed automata, Support vector machine, Synthesis of parameters
\end{IEEEkeywords}
%
%
%

\section{Introduction}
\label{sec:inr}
Real-time applications are  increasing importance, so are  their complexity and requirements for trustworthiness,  in the era of Internet of Things (IoT), especially  
in the areas of industrial control and smart homes. Consider, for example, the control system of a boiler used in house. Such a system is required to  switch  on the gas within a certain bounded period of time  when the water gets too cold.  Indeed, the design and implementation of the system  not only have to  guarantee the correctness of system functionalities,  but also need to assure  that the application is in compliance with the non-functional requirements, that are timing constraints  in this case. 

{\em Timed automata} (TAs)  \cite{Alur90,alur1994a} are widely used  for modeling and verification of real-time systems. However, one disadvantage of the TA-based approach is that  it  can only be used to  verify \emph{concrete} properties, i.e., properties with concrete values of all timing parameters occurring in the system. Typical examples of such  parameters  are  upper and lower bounds of computation time, message delay and time-out. This makes the traditional TA-based approach  not ideal for the design of real-time applications because in the \emph{design phase}  concrete values are often  not available. This problem is usually dealt with extensive trial-and-error and prototyping activities to find out what concrete values of the parameters are suitable. This  approach of design is costly, laborious, and error-prone, for at least two reasons:  
(1) many trials with different parameter configurations suffer from unaffordable costs, without enough assurance  of a safety standard because a sufficient coverage of configurations is  difficult to achieve; (2) little  or no feedback information is provided to the developers to help improve the design when  a system malfunction is detected. 

\subsection{Decidable parametric timed automata}
To mitigate the limitations of the TA-based approach, 
{\em parametric timed automata} (PTAs) are  proposed \cite{alur1993parametric,annichini2000symbolic,bandini2001application,HUNE2002183}, which  allow more general constraints on  invariants of nodes (or states) and guards of edges (or transitions) of an automaton. Informally, a  clock $x$ of a PTA $\TA$  is called a \pcc\ if $x$ and some parameters both occur in a constraint  of $\TA$.    Obviously, given any  valuation of the  parameters in a PTA, we obtain a concrete TA. One of the most important questions of PTAs is the \emph{synthesis problem}, that  is, for a given property to compute the entire set of valuations of the parameters for a PTA such that when the parameters are instantiated by these valuations, the resulting TAs all satisfy the property. The  synthesis problem for general PTAs is known to be undecidable. There are, however, several proposals  to restrict the general PTAs from different perspectives to gain decidability.   
Two kinds of restrictions that are being widely investigated are (1) on  the number of clocks/parameters in the PTA;  and (2) on the way in which  parameters are bounded, such as the  L/U PTAs \cite{HUNE2002183}. 

\subsection{Our contribution}
The first part of our work  is about  restrictions of the first kind above, and  it considers the PTAs, which we later refer to as  {\em one-one PTAs},  which have  {\em one parametrically constrained clock} and one parameter, but allowing arbitrary number of other clocks. We extend the result of \cite{alur1993parametric} and  provide an algorithm to  construct  the feasible parameter region explicitly for a  one-one PTA and a property.

The second part of our work studies {\em L/U automata}. In an L/U automaton, each parameter occurs either as a lower-bound only in the invariants and guards, or as an upper-bound only therein. In other words, a parameter in an L/U automaton cannot occur  as both a lower-bound \emph{and} an upper-bound of clocks. We call an L/U automaton an {\em L-automaton} (resp. {\em U-automaton}) if all parameters occur only as lower-bounds (resp. upper-bounds).  The results of \cite{HUNE2002183} show    that the emptiness problem for L/U automata is decidable. They also extend the model checker {\tt UPPAAL} to synthesize linear parameter constraints for L/U-automata. Decidability results for L/U automata have been further investigated \cite{bozzelli2009decision}. There for  L-automata and U-automata, the authors solve the synthesis problem for a restricted class of liveness  properties, i.e. the  existence of an \emph{infinite accepting run} for the automaton. Our work in this paper, instead of   the liveness property  considered in \cite{bozzelli2009decision}, considers  an other class of properties. These properties  are generally generally described  as formulas in  temporal logic of the form $\exists \Diamond \phi$ and $\forall \Box \phi$, and their satisfiction  by a PTA can be  treated as reachability properties. Here, $\phi$ is a state property, $\exists$ (or $\forall$) means there exists (resp. for all)  runs.  For these properties, we solve the parameter synthesis problem for  L-automata and U-automata by   explicitly constructing the feasible parameter regions. 

Furthermore, for  general model of L/U automata, we show that the feasible parameter region forms a ``single connected"  set provided that the property contains existential quantifiers only. Being  connected here means that for any pair of valuations $v$ and $v'$ there is at least one sequence of $v=v_1,\ldots, v_\ell = v'$ feasible valuations, such that the Euclidean distance between $v_i$ and $v_{i+1}$ is $1$.
This  topological property of feasible regions allows us to develop a machine learning algorithm based on support vector machine (\SVM) to identify the  boundary of a feasible region. 
%

 \subsection{Related work} 
 The earliest work on PTAs goes back to 90's  \cite{alur1993parametric}  by Alur, {\em et al}, where   the general undecidability  of the reachability emptiness for a PTA with three or more parametrically constrained clocks  is proved. There, a  backward computation based algorithm to solve the {\em emptiness problem}  is also presented   for a nontrivial class of PTAs  which have only one parametrically constrained clock.  It is  also shown there    that for the remaining class of PTAs, that is the class of PTAs with exactly two parametrically constrained clocks, 
 the problem is closely related to various hard (viz. open) problems of logic and automata theory. A semi-algorithm based on expressive symbolic representation structures called parametric difference bound matrices  is   proposed in \cite{annichini2000symbolic}. The algorithm  uses accurate extrapolation techniques to speed up the reachability computation and ensure termination.
 The work in \cite{alur2001parametric} proposes a class of PTAs in which a parameter cannot be shared by a lower bound constraint and an upper bound constraint.   And, in this setting, the work there   studies  the Linear Temporal Logic (LTL) augmented with parameters. 
 
  \rev{A SMT-based method of computing under-approximation
  	of the solution to this problem for L/U automata is provided in \cite{KP12a}. \cite{AL17} further studies L/U automata by considering liveness related problems.} 

  Symbolic algorithms are proposed in  \cite{jovanovic2015integer} to synthesize all the values of parameters for the reachability and unavoidability properties for bounded integer-valued parameters. A  proof is given  in  \cite{bundala2014advances} to the decidability of the emptiness problem for  the class of PTAs  which have  two parametrically constrained clocks  and one parameter.  An  adaption of  the counterexample guided abstraction refinement (CEGAR)  is used in  \cite{frehse2008counterexample} to  obtain an under-approximation of the set of good parameters using linear programming.  An method, called an ``inverse method'', is  provided  in \cite{andre2009inverse}. This method,  for a given set of sample parameter valuations as the input,  synthesizes a constraint on  the parameters such that i) all  sample valuations satisfy the constraint and, ii) the TAs defined by any two parameter valuations satisfying the constraint are time-abstract equivalent.
The work in  \cite{andre2015language} considers the class of  deterministic PTAs with a single lower-bound integer-valued parameter or a single integer-valued upper-bound parameter and one extra (unconstrained) parameter.  There, it also shows that, for these PTAs, the language-preservation problem  is proved to be  decidable. The  PTAs that we consider  in this paper  is orthogonal to those which  are  presented in \cite{andre2015language}.  \rev{Instead of  synthesizing the full set of
	parameter constraints in general, \cite{Knapik2012BMC} presents method to obtain a part of this set. The work in \cite{benes2015language} considers the class of   emptiness problem of PTA with one  parametrically constrained clock. Our idea is similar with its', both given an upper bound of parameter and prove that when the value of parameter greater than this upper 
bound, the behaviour of corresponding timed automata are same under abstract view.  A machine learning based method for synthesizing constraints on the parameters, which guarantee the system behaves according to certain properties, is provided in \cite{LSGA17}.} 
Finally, we refer to  \cite{Andr16} for a survey of recent progress in decidability problems of PTAs. 

 \subsection{Organization}
 We define  in Section \ref{sec:pre}
the  model of PTA  and present the  relevant notions and properties of PTAs. In  Section \ref{sec:oneP}, we study one-one PTAs and present  the algorithm to compute the feasible region for  a one-one PTA and a property. We present, in Section \ref{sec:newresult}, the work on the parameter synthesis problem for L/U PTAs. We prove in Section~\ref{sec:furth}  the theorem of strong connectivity of feasible regions for general L/U automata,  and based on this theorem we  present a  learning-based method to identify the boundary  of  a feasible parameter region. Finally, we draw the conclusions in Section \ref{sec:con}.


\section{Parametric Timed Automata}
\label{sec:pre}
We introduce the basis of  PTAs  and set up terminology for our discussion. We first define some preliminary notations before we introduce PTAs. We will use a model of  labeled transition systems (LTS) to define  semantic behavior of PTAs.

\subsection{Preliminaries}

We use $\ZZ$, $\NN$, $\mathbb{R}$ and $\RR$ to denote the  sets of integers, natural numbers, real numbers and non-negative real numbers, respectively.  Although each  PTA involves only a finite number of clocks and a finite number parameters, we need an infinite set  of {\em clock variables} (also simply called {\em clocks}), denoted by $\mathcal{X}$ and an infinite set of {\em parameters}, denoted by $\mathcal{P}$, both are enumerable. We use $X$ and $P$ to denote (finite) sets of clocks and parameters and   $x$ and $p$, with subscripts if necessary,  to denote clocks and parameters, respectively. 

We mainly consider dense time, and thus we  define a {\em clock valuation} $\omega$
as a function of the type $\mathcal{X}\mapsto \RR$ from the set of clocks to the  set of non-negative real numbers,   assigning each clock variable a nonnegative real number. For a finite set $X=\{x_1, \ldots, x_n\}$ of clocks, an evaluation $\omega$  restricted on $X$ can be represented by  a $n$-dimensional   point $\omega(X)=(\omega(x_1),
\omega(x_2),\ldots,\omega(x_n))$, and it is called an {\em parameter valuation of $X$} and simply denoted as $\omega$ when there is no confusion.  Similarly,  a {\em parameter valuation} $\gamma$  is an assignment of values to the  parameters, but the values are natural numbers, that is $v: \mathcal{P}\mapsto \NN$.  For a finite set $P=\{p_1,\ldots, p_m\}$ of $m$ parameters, a parameter valuation $\gamma$ restricted on $P$ corresponds to  a $m$-dimensional point $(\gamma(p_1),\gamma(p_2),\ldots,\gamma(p_m))\in \NN^m$, and we use this vector to denote the valuation $\gamma$ of $P$ when there is no confusion. 
		
\begin{definition}[Linear expression]
A linear expression $e$ is an expression of the form $c_0+ c_1p_1+\cdots+c_mp_m$,  where $c_0,\cdots,c_n\in \ZZ$.   
\end{definition}
We use $\mathcal{E}$ to denote the set of linear expressions, $\textit{con}(e)$  the constant $c_0$, and  $\CF(e,p)$  the coefficient of $p$ in $e$, i.e. $c_i$ if $p$ is $p_i$ for $i=1,\ldots, m$, and $0$,  otherwise. For the convenience of discussion, we also say the infinity $\infty$ is a linear expression.  We call  expression $e$ a {\em parametric expression} if $c_i\neq 0$ for some $i\in \{1,\ldots, m\}$, a {\em concrete  expression}, otherwise (i.e., $e$ is parameter free).

A PTA only allows {\em parametric constraints} of the form  $x-y\sim e$, where $x$ and $y$ are clocks, $e$ is a linear expression, and the  ordering  relation ${\sim}\in  \{>,\ge, <,\le, =\}$. A constraint $g$ is called a  {\em parameter-free} (or {\em concrete}) {\em  constraint} if the expression in it is concrete.  For a linear expression $e$, a parameter valuation $\gamma$, a clock valuation $\omega$ and a constraint $g$, let 
\begin{itemize}
\item $e[\gamma]$ be the (concretized) expression obtained from $e$ by substituting the value $\gamma(p_i)$ for $p_i$ in $e$, i.e. $c_0+ c_1\times \gamma(p_1)+ \ldots + c_m\times \gamma(p_m)$,
\item $g[\gamma]$  be the predicate obtained from constraint $g$  by substituting the value $\gamma(p_i)$ for $p_i$ in $g$, and 
\item $\omega \models g$ holds if $g[\omega]$  holds.
\end{itemize}

A pair $(\gamma, \omega )$ of parameter valuation and clock valuation   gives an evaluation to any parametric constraint $g$. We use $g[\gamma, \omega]$ to denote the truth value  of $g$ obtained by substituting each parameter $p$  and each clock $x$  by their values $\gamma(p)$ and $\omega(x)$, respectively. We say the pair of valuations $(\gamma, \omega)$ satisfies constraint $g$, denoted by $(\gamma, \omega)\models g$,  if $g[ \gamma, \omega]$ is evaluated to true. For a given parameter valuation $\gamma$, we define  $\Model{g[\gamma]}= \{\omega\mid (\gamma, \omega)\models g\}$ to be the set of clock valuations which together with $\gamma$ satisfy $g$.

A clock $x$ is reset by an {\em update} which is an expression of the form $x:= b$, where  $b\in \NN$. Any reset $x:=b$ will change a clock valuation $\omega$ to a clock valuation $\omega'$ such that $\omega'(x) =b$ and $\omega'(y) =\omega(y)$  for any other clock $y$. Given a clock valuation $\omega$ and a set $u$ of updates, called an {\em update set}, which contains at most one reset for one  clock,  we use $\omega[u]$ to denote the clock valuation after  applying all the clock resets in $u$ to $\omega$.  We  use $c[u]$ to denote the constraint which is used to assert the relation of the parameters with the clocks values after the clock resets of $u$. Formally,  $c[u](\omega)\deff c(\omega[u])$ for every clock valuation $\omega$. 

It is easy to see that the general constraints $x-y\sim e$ can be expressed in terms of {\em atomic constraints} of the form $b_1x-b_2y\prec e$,  where ${\prec} \in \{<,\le\}$ and $b_1,b_2\in \{0,1\}$. To be explicit, an atomic constraint is in one of the following three  forms $x-y \prec e$, $x\prec e$, or $-x\prec e$.  We can write  $-x_i \prec  e$ as $x_i\succ -e$ and $x-y\prec e$ as $y-x\succ -e$, where ${\succ} \in \{>,\geq\}$.  However, in this paper we mainly  consider  {\em simple constraints}  that are  finite  conjunctions of atomic constraints. 

\subsection{Parametric timed automata}
We assume the knowledge of timed automata (TAs), e.g., \cite{alur1999timed,bengtsson2004timed}. A  clock constraint of a TA  either a  {\em invariant property} when the TA is  in a state (or location)  or a {\em guard condition} to enable the changes of states (or a state transition). Such a constraint  is  in general  a Boolean expression of parametric free  atomic constraints. However, we can assume that the  guards and invariants of TA are simple concrete  constraints, i.e. conjunctions of concrete atomic constraints. This is because we can always transform a TA  with  disjunctive guards and invariants to an equivalent  TA  with  guards and invariants which are simple constraints only.   

In what follows, we define  PTAs which extend TAs  to allow the use of parametric simple constraints as guards and invariants (see \cite{alur1993parametric}).

\begin{definition}[PTA]
	\label{def:pta}
	Given a finite set of clocks $X$ and a finite set of parameters $P$, a PTA   is a 5-tuple $\TA=(\Sigma,Q,q_0,I,\rightarrow)$,
	where
	\begin{itemize}
		\item $\Sigma$ is a finite set of actions.
		\item  $Q$ is a finite set of locations and  $q_0\in Q$ is  called the initial location,
		\item  $I$ is the invariant, assigning to every $q\in Q$ a simple constraint $I_q$ over 
		the clocks $X$ and parameters $P$, and
		\item $\rightarrow$
		is a discrete transition relation whose elements are of the form $(q,g,a,u,q')$, where $q,q'\in Q$, $u$ is an update set, $a\in \Sigma$ and $g$  is a simple constraint.
	\end{itemize}
\end{definition}
Given a PTA $\mathcal{A}$, a tuple $(q,g,a,u,q')\in {\rightarrow}$ is also denoted  by $q \xrightarrow{g\&a[u]} q'$, and it is called a transition step (by the guarded action $g\&a$). In this step, $a$ is the  action that triggers the transition. The constraint $g$ in the transition step is called the {\em guard} of the transition step, and only when $g$ holds in  a location  can the transition  take place. By this transition step, the system modeled by the automaton changes from location $q$ to location $q'$, and the clocks are reset by the updates in $u$. However, the meaning of the guards and clock resets and acceptable runs of a PTA will be defined by a labeled transition system (LTS) later on. At this moment, we define a {\em syntactic run} of a PTA $\mathcal{A}$ as a sequence of consecutive transitions step starting from the initial location
\[\tau = (q_0, I_{q_0})\xrightarrow{g_1\&a_1[u_1]}(q_1, I_{q_1})\cdots \xrightarrow{g_\ell \&a_\ell [u_\ell]}(q_\ell, I_{q_\ell})
\]
 
Given a PTA $\TA$, a clock $x$ is said to be a {\em parametrically constrained  clock} in $\TA$  if there is a parametric constraint containing $x$. Otherwise,   $x$ is a concretely   constrained  clock. We can follow the procedures in   \cite{alur1993parametric} and \cite{bundala2014advances} to  eliminate  from  $\TA$ all the concretely constrained clocks.  Thus, the  rest of this paper only considers the PTAs  in which all clocks are parametrically constrained. We use $\textit{expr}(\mathcal{A})$ and $\textit{para}(\mathcal{A})$ to denote the set of all linear expressions and parameters in a PTA $\mathcal{A}$, respectively.
 
\begin{figure}[h!]
	\centering
	\includegraphics[width=0.35\textwidth]{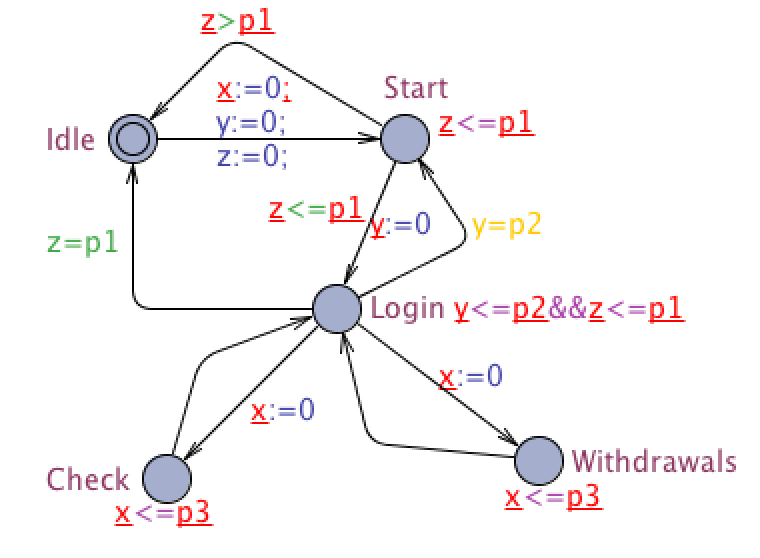}
	\caption{An ATM modeled using a PTA.} \label{fig:atm}
\end{figure}

\subsubsection*{Example 1}
The PTA in {\em Fig. \ref{fig:atm}} models an ATM. It has 5 locations, 3 clocks $\{x,y, z\}$  and 3 parameters $\{p_1,p_2, p_3\}$. This PTA is deterministic and
all the clocks are parametric. To understand the behavior of state transitions,  for examples, the  machine can initially idle for an arbitrarily long time. Then,  the user can start the system by, say, pressing a button and the PTA enters location ``Start" and resets the three clocks. The machine can remain in ``Start" location as long as the invariant  $z\le p_1$ holds, and during this  \ly{time  the user can drive the system} (by pressing a corresponding button) to login their account and the automaton  enters  location ``Login" and resets clock $y$.  A time-out action occurs and it  goes back to ``Idle'' if  the machine stays at ``Start''  for too long and the  invariant $z\leq p_1$ becomes false. Similarly, the machine can remain in location ``Login" as long as the invariant $y\le p_2 \wedge z\le p_1$ holds and during this time the user can decide either to ``Check'' (her balance) or to ``Withdraw" (money), say by pressing  corresponding buttons.   However,  if the user does not take any of these actions $p_2$ time units after the machine enter location ``Login",  the machine will back  to ``Start" location.

\subsection{Semantics of PTA via labeled transition systems}
We use a  standard model of  {\em labeled transition systems} (LTS) for describing  and analyzing the behavioral properties of  PTA.
\begin{definition}[LTS]
	\label{def:lts} A \emph{labeled transition system} (LTS) over a set of (action) symbols $\Delta$ is a triple
	$\LTS=(S,S_0,\rightarrow)$, where
	\begin{itemize}
	\item  $S$ is a set of states with a subset  $S_0\subseteq S$ of states called the  initial states.
	\item ${\rightarrow} \subseteq S\times \Delta  \times S$  is a relation, called the transition relation. 
	\end{itemize}
	We write $s \xrightarrow{a} s' $ for a triple $(s,a,s') \in {\rightarrow}$ and it is called a transition step by action $a$. 
	
	A run  of $\LTS$ is a finite alternating sequence of states in $ S$ and actions  $\Delta$, $ \xi = s_0a_1s_1\ldots a_\ell  s_\ell $,   such that $s_0\in S_0$ and $s_{i-1}\xrightarrow{a_i}s_i\in \rightarrow$  for $i=1,\ldots,\ell$. A run $\xi$ can be written in the form of $s_0\xrightarrow{a_1} s_1 \xrightarrow{a_2} \cdots \xrightarrow{a_{\ell}} s_\ell $. 
	
	The length of a run $\xi$ is its number $\ell$ of transitions steps and it is denoted as $|\xi|$, and a state $s\in S$ is called reachable in $\LTS$ if $s$ is the last state a run of $\LTS$, e.g. $s_\ell$ of $\xi$.  
\end{definition}

\begin{definition}[LTS semantics of PTA]
	\label{def:sem}
	For a PTA $\TA =(\Sigma, Q, q_0, I,\rightarrow)$ and a parameter valuation $\gamma$, the concrete semantics of PTA under $\gamma$, denoted by $\TA[\gamma]$,  is the LTS $(S,S_0,\rightarrow)$ over $\Sigma\cup \RR$, where
\begin{itemize}
\item a state in  $S$  is a  location $q$ of  $\mathcal{A}$ augmented with the clock valuations which together with the parameter valuation $\gamma$ satisfy the invariant   $I_q$ of the location, that is
	\[S=\{ (q,\omega)\in Q\times (X\rightarrow \RR)\mid (\gamma, \omega)\models I_q\}\]
  \[	S_0=\{(q_0,\omega)\mid (\gamma, \omega) \models I_{q_0} \wedge \omega=(0,\cdots,0)\} \]

\item  any transition step in the transition $\rightarrow$ of the LTS is either an instantaneous  transition step  by  an action in $\Sigma$ defined by $\mathcal{A}$ or by a time advance, that are specified by the following rules, respectively
	\begin{itemize}
		\item {\bf instantaneous transition}:  for any $a\in \Sigma$, $(q,\omega)\xrightarrow{a}(q',\omega') $ if  there are  simple constraint $g$ and an  update set  $u$ such that $q\xrightarrow{ g\&a[u]} q'$, $(\gamma,\omega) \models g $ and $\omega'=\omega[u]$; and 
		\item {\bf time advance transition} $(q,\omega) \xrightarrow{d} (q',\omega') $ if $q'=q$ and $\omega'=\omega+d$.
	\end{itemize}
  \end{itemize}
\end{definition}
A {\em concrete run} of a PTA $\mathcal{A}$ for a given valuation $\gamma$ is a sequence of consecutive state transition steps $\xi=s_0\xrightarrow{t_1} s_1 \xrightarrow{t_2} \cdots \xrightarrow{t_{\ell}} s_\ell $  of the LTS $\mathcal{A}[\gamma]$, which we also call a run of the LTS $\mathcal{A}[\gamma]$.  A state $s = (q,\omega)$ of $\mathcal{A}[\gamma]$ is a {\em reachable state} of  $\TA[\gamma]$  if there exists some run  $\xi=s_0\xrightarrow{t_1} s_1 \xrightarrow{t_2} \cdots \xrightarrow{t_{\ell}} s_\ell $ of  $\TA[\gamma]$ such that $s=s_\ell $.

Without the loss of generality,
we merge  any two consecutive time advance transitions respectively labelled by  $d_i$ $d_{i+1}$ into a single time advance transition labels by $d_i+d_{i+1}$. We can further merger a consecutive pair $s\xrightarrow{d} s'\xrightarrow{a} s''$ of a timed advance transition by $d$ and  an instantaneous transition by an action $a$  in a run into a single observable transition  step $s\xrightarrow{a} s''$. If we do this repeatedly until all time advance steps are eliminated, we obtain an {\em untimed run} of the PTA (and the LTS), and the sequence of actions in an untimed run is called a  {\em trace}.

We call an untimed run  $\xi=s_0\xrightarrow{a_1}s_1\cdots \xrightarrow{a_\ell}s_\ell$  a {\em simple run}  if $\omega_i\ge \omega_{i-1}$ for  $i=1,\cdots, \ell$, where $s_i=(q_i,\omega_i)$.
It is easy to see that $\xi$ is a {\em simple untimed run} if each transition by $a_i$ does not have any clock reset in $\xi$. 

\begin{definition}[LTS of trace]
 For a   PTA  $\mathcal{A}$ and a syntactic run
 \[\tau=(q_0{,}I_{q_0})\xrightarrow{g_1\&a_1[u_1]}(q_1{,}I_{q_1}){\cdots} \xrightarrow{g_\ell{\&}a_\ell[u_\ell]}(q_\ell{,}I_{q_\ell})\]
we define the PTA $\TA_\tau{=}(\Sigma_\tau {,}Q_\tau {,}q_{0,\tau}{,}I_\tau,\rightarrow_\tau)$, where 
 \begin{itemize}
 \item $\Sigma_\tau =\{a_i \mid i= 1,\cdots, \ell\}$,
 \item $Q_\tau =\{q_0,\cdots, q_\ell\}$ and $q_{0,\tau}=q_0$,
 \item  $I_\tau (i)=I_{q_i}$ for $i\in Q$, and 
 \item $\rightarrow_\tau =\{(q_{i-1},g_i, a_i, u_i,q_i)\mid i=1,\cdots, \ell \}$.
 \end{itemize}
Give a parameter valuation $\gamma$, the  concrete semantics
of $\tau$ under $\gamma$ is  defined to be the LTS $\TA_\tau[\gamma]$. 
\end{definition}
For a syntactic run 
 \[\tau=(q_0{,}I_{q_0})\xrightarrow{g_1\&a_1[u_1]}(q_1{,}I_{q_1}){\cdots} \xrightarrow{g_\ell{\&}a_\ell[u_\ell]}(q_\ell{,}I_{q_\ell})\]
We use $R(\mathcal{A}_\tau[\gamma])$ to denote the set of  states $(q_k, \omega_k)$ of  $\mathcal{A}_\tau[\gamma]$ such that the following is a untimed run of $\TA_\tau[\gamma]$
 \[(q_0,\omega_0)\xrightarrow{a_1} (q_1,\omega_1) \cdots \xrightarrow{a_k}(q_k, \omega_k) \cdots\xrightarrow{a_\ell}(q_\ell, \omega_\ell)\]

\subsection{Two decision problems for PTA}
We first present the  properties of PTAs  which we consider in this paper. 

\begin{definition}[Properties]
	\label{def:prop}
	A   state  and a   property for a PTA  are specified by a state predicate $\phi$ and a temporal formula $\psi$ defined by the following syntax, respectively: for $x,y\in X$, $e\in \mathcal{E}$ and $\prec\in \{<,\le, =\}$  and $q$ is a location.
	
	\[\begin{array}{lllll}
	  \phi&::=& x\prec e \mid -x \prec e\mid x-y\prec e\mid \\
	  && q\mid \neg \phi\mid \phi  \wedge \phi \mid \phi \vee \phi\\
	  \psi&::=&\forall \Box\phi \mid \exists \Diamond \phi
	  \end{array}
	  \]

\end{definition}

Let  $\gamma$ be a parameter valuation  and   $\phi$ be  state formula. We  say $\TA[\gamma]$ {\em satisfies} $\exists\Diamond \phi$, denoted  by $\TA[\gamma]\models \exists\Diamond \phi$,  if there  is a  reachable state $s$ of  $\TA[\gamma]$  such that $\phi$ holds in  state $s$. We call these properties 
{\em reachability  properties}.
 Similarly,  $\TA[\gamma]$ {\em satisfies}  $\forall \Box \phi$, denoted by $\TA[\gamma] \models \forall \Box \phi$,  if $\phi$ holds in all reachable states of $\TA[\gamma]$.  We call these properties {\em safety properties}. We can see that if $\TA[\gamma]{\models}\exists\Diamond \phi$,  there is an syntactic  run $\tau$ such that there is  a state in  $R(\mathcal{A}_\tau[\gamma])$  satisfies  $\phi$.
In this case, we also say that the syntactic   run $\tau$ satisfies $\phi$ under the parameter valuation $\gamma$. We denote it by $\tau[\gamma]\models \phi$.

We are now ready to present the formal statement of the parameter synthesis problem and the emptiness problem of PTA.

\begin{problem}[The parameter synthesis problem]
	\label{pro:synth}
	Given a PTA $\TA$ and a property  $\psi$, compute the entire set  $\Gamma(\TA,\psi)$ of parameter valuations such that $\TA[\gamma]\models \psi$ for each  $\gamma \in \Gamma(\TA,\psi)$.
\end{problem}

 Solutions to the problems are important in system plan and optimization design.
Notice that when there are no parameters in $\mathcal{A}$,  the problem is decidable in PSPACE \cite{alur1994a}. This  implies  that if there are parameters in $\mathcal{A}$, the satisfaction  problem $\TA[\gamma]\models \psi$ is decidable in PSPACE for any given parameter valuation $\gamma$.

A special case of the synthesis problem is the emptiness problem, which is  by itself very important and formulated below.
\begin{problem}[Emptiness problem]
	\label{pro:emp}
	Given a PTA  $\TA$ and a  property $\psi$, is there a parameter valuation $\gamma$  so that $\TA[\gamma]\models \psi$?
\end{problem}
This is equivalent to the problem of checking if the set $\Gamma(\mathcal{A}, \psi)$ of feasible parameter valuations is empty.

Many safety verification problems can be reduced to the emptiness problem. We say that {\em Problem \ref{pro:emp}} is a special case of  {\em Problem \ref{pro:synth}} because solving the latter for a PTA $\TA$ and a property $\psi$ solves {\em Problem \ref{pro:emp}}. 

It is known that  the emptiness problem is decidable for a PTA with only one clock \cite{alur1993parametric}. However, the problem becomes undecidable for PTAs with more than two clocks~\cite{alur1993parametric}. 
Significant progress could only be made in 2002  when the  subclass of L/U PTA  were  proposed in \cite{HUNE2002183} and the emptiness problem was proved to be decidable for these automata. In the following, we will extend these results and define some classes of PTAs for which  we propose solutions to the parameter synthesis problem and the emptiness problem.


\section{Parameter  synthesis of PTA with one parametric clock}
\label{sec:oneP}
In this section, we present our first contribution and the solution to the   {\em parameter synthesis problem} of PTAs  with one parametric clock $x$ and one parameter which we call   {\em one-one PTAs}. A  one-one PTA allows an arbitrarily number of  {\em concretely constrained clocks}, and we denote a one-one PTA $\mathcal{A}$ with the parametric  clock $x$ and parameter $p$ as $\mathcal{A}[x,p]$, and as $\mathcal{A}$ when there is no confusion. We use $\mathcal{A}[p=v]$ for the LTS (and the concrete PTA) under the valuation $p=v$. Our main theorem is that the entire set of feasible parameter valuations $\Gamma(\mathcal{A}[x,p], \psi)$  is computable for any one-one PTA $\mathcal{A}[x,p]$ and any  property $\psi$ defined. The theorem is formally stated below.

\begin{theorem}[Synthesisability of one-one PTA]
\label{th:one-one-PTA}
	 The set $\Gamma(\TA,\psi)$  of feasible parameter valuations is solvable  for any one-one PTA $\mathcal{A}[x,p]$ and any  property $\psi$.
\end{theorem}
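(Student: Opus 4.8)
The plan is to reduce the synthesis problem to finitely many concrete model-checking queries, by exploiting that, with a single parameter, $\Gamma(\TA,\psi)$ is a subset of $\NN$ whose characteristic function is \emph{eventually constant} with a computable threshold. First I would dispose of the safety case by duality: since the state-predicate grammar is closed under negation, $\TA[v]\models\forall\Box\phi$ holds exactly when $\TA[v]\models\exists\Diamond\neg\phi$ fails, so $\Gamma(\TA,\forall\Box\phi)=\NN\setminus\Gamma(\TA,\exists\Diamond\neg\phi)$. Hence it suffices to solve the reachability case $\psi=\exists\Diamond\phi$, and complementation in $\NN$ recovers safety.

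The crux is a stabilization lemma: there is a computable bound $B\in\NN$ such that for all $v,v'\ge B$ the timed automata $\mathcal{A}[v]$ and $\mathcal{A}[v']$ (i.e.\ $\mathcal{A}[p=v]$ and $\mathcal{A}[p=v']$) are time-abstract equivalent, whence $\mathcal{A}[v]\models\psi \iff \mathcal{A}[v']\models\psi$. To prove it I would build a parametric region graph for $\mathcal{A}[x,p]$: regions are defined, as usual, by the integral and fractional orderings of the clocks relative to the finitely many bounds occurring in guards and invariants, except that each parametric bound now has the shape $c_0+c_1p$. Because only the single clock $x$ is parametrically constrained, all parametric bounds lie on one axis; their mutual order, and their order against the maximal concrete constant $M_0$ and against the integer reset values, is governed by comparisons $c_0+c_1p\sim c_0'+c_1'p$, each of which rewrites to $(c_1-c_1')p\sim c_0'-c_0$ and therefore flips at most once, at the crossing point $(c_0'-c_0)/(c_1-c_1')$. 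Taking $B$ strictly larger than $M_0$ and than every such finite crossing point (over pairs of bounds with $c_1\ne c_1'$) freezes all these orderings. I would then argue that for $v\ge B$ the region graph is the \emph{same} labelled graph up to renaming of regions: the same edges are enabled, the same resets fire, and each region satisfies $\phi$ or not independently of $v$. Consequently the set of reachable $\phi$-regions, and thus the truth value of $\exists\Diamond\phi$, is constant on $[B,\infty)$.

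With the lemma in hand the algorithm is immediate. For each $v\in\{0,1,\dots,B\}$ the automaton $\mathcal{A}[v]$ is a concrete timed automaton, so $\mathcal{A}[v]\models\psi$ is decidable (in PSPACE) by the classical region construction; enumerating these yields $\Gamma(\TA,\psi)\cap[0,B]$. For $v>B$ the truth value equals that of $\mathcal{A}[B]$, so
\[
\Gamma(\mathcal{A}[x,p],\psi)=\{\,v\le B : \mathcal{A}[v]\models\psi\,\}\ \cup\ R,
\]
where $R=[B,\infty)\cap\NN$ if $\mathcal{A}[B]\models\psi$ and $R=\emptyset$ otherwise. This is a finite set together with a possible ray, hence a computable (indeed eventually constant) subset of $\NN$, and complementation settles the safety case.

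I expect the stabilization lemma to be the main obstacle. The delicate point is not ordering the bounds among themselves but showing that the \emph{whole} region graph stabilizes: one must verify that a single parametric clock compared against several parametric bounds that need not move monotonically in the same direction (some acting as upper, some as lower bounds) can neither create nor destroy reachable regions beyond $B$, and that the interleaving with resets of $x$ and with the concretely constrained clocks preserves the isomorphism. Treating equality and strict-versus-nonstrict constraints exactly at the crossing points, and confirming eventual constancy rather than merely eventual periodicity, is where the care is required.
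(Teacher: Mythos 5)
Your proposal is correct in its overall architecture and that architecture coincides with the paper's: both reduce the safety case by duality (the paper's Corollary~\ref{coro:new5}), both establish a computable threshold beyond which the truth value of $\TA[p=v]\models\exists\Diamond\phi$ is constant, and both then decide the finitely many valuations below the threshold by concrete model checking and attach the ray $[B,\infty)\cap\NN$ according to the answer at the threshold. Where you genuinely diverge is in how the stabilization is proved. The paper never builds a parametric region graph: its constant $C=2\cdot\max\{\textit{maxC}(\TA),\textit{maxV}(\psi)\}+2$ plays the role of your $B$, and the stabilization (Lemma~\ref{lem:new1}) is obtained by explicit run surgery --- encode $\phi$ into the guards (Lemma~\ref{lem:moveP}), push invariants into guards (Lemma~\ref{lem:moveI}), order the ``effective'' lower and upper bounds of the guards along a reset-free run by the coefficient of $p$ (Lemma~\ref{lem:one}, Corollary~\ref{coro:up1}, Lemma~\ref{lem:noUpdate}), construct a witnessing run at any $T\ge C$ with an explicit fixpoint algorithm (Algorithm~\ref{the:cvu}), and finally induct on the number of resets of $x$. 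Your route via an isomorphism of parametric region graphs proves a stronger statement (time-abstract equivalence of $\TA[v]$ and $\TA[v']$ for $v,v'\ge B$, hence preservation of \emph{all} the properties of Definition~\ref{def:prop}, not just the one fixed $\exists\Diamond\phi$), and your crossing-point analysis yields essentially the same magnitude of threshold since $|c_1-c_1'|\ge 1$ forces every finite crossing point below $2\cdot\textit{maxC}(\TA)$; this is in fact the strategy the paper attributes to \cite{benes2015language}. The price is exactly the obstacle you flag: with the extra concretely constrained clocks present, the region graph must track fractional orderings between $x$ and the concrete clocks, and the interaction of those with parametric bounds is not obviously $v$-independent. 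The paper avoids this entirely by first eliminating all concretely constrained clocks via the constructions of \cite{alur1993parametric} and \cite{bundala2014advances}, reducing to a single clock; you should do the same before building the region graph, after which your stabilization lemma goes through and the remainder of your argument is sound.
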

The establishment and proof of this theorem involve a sequence of techniques to reduce the problem to computing the set of reachable states of an LTS. The major steps of  reduction include 
\begin{enumerate}
\item Reduce the problem of satisfaction of a property $\psi$, say in the form of $\exists \Diamond \phi$,  by a  syntactic run $\tau$ to a reachability problem. This is done by encoding the state property in $\psi$ as a conjunction of  the invariant of a state.
\item Then we   move the  state invariants in a syntactic run out of the states and conjoin them to the guards of the corresponding transitions. 
\item Construct feasible runs for a given syntactic run in order to reach a given location.  This requires to define the notions of effect lower and  effect upper bounds of guards of transitions, through which an lower bound of feasible  parameter valuation is defined. 
\end{enumerate} 

 \subsection{Reduce satisfaction of  system to  reachability problem} 
We note  that $\psi$ is either of the form $\exists \Diamond \phi$ or  the  dual form $\forall  \Box \phi$, where $\phi$ is a state property.  Therefore, we only need to consider the problem of computing the set $\Gamma(\TA,\psi)$  for the case when $\psi$ is a formula of the form  $\exists \Diamond \phi$, i.e., there is a syntactic run $\tau$ such that  $\tau[\gamma] \models \phi$ for every $\gamma\in \Gamma(\TA,\psi)$. Our idea is to reduce the problem  of deciding $\TA \models \psi$ to a reachability problem of an LTS by encoding the state property $\phi$ in  $\exists \Diamond \phi$  into the guards of the transitions of $\mathcal{A}$. 
 
\begin{definition}[Encoding state property]
	Let  $\phi$ be a  state formula and $q$ be a location. We definite $\alpha(\phi,q)$ as follows, where $\equiv$ is used to denote syntactic equality between formulas:
	\begin{itemize}
		\item  $\alpha(\phi, q)\equiv \phi$ if $\phi \equiv x-y\prec e$,  $\phi \equiv x\prec e$ or   $\phi \equiv -x\prec e$,  where $x$ and $y$ are clocks and $e$ is an expression. 
		\item  when $\phi$ is a location $q'$,  $\alpha(\phi,q')\equiv true$ if $q'$ is $q$ and  $false$ otherwise.
		\item $\alpha $ preserves all Boolean connectives, that is   $\alpha(\neg \phi_1,q)\equiv \neg\alpha(\phi_1,q)$,  $\alpha(\phi_1\wedge\phi_2,q)\equiv \alpha(\phi_1,q)\wedge \alpha(\phi_2,q)$, and  $\alpha(\phi_1\vee \phi_2,q)\equiv  \alpha(\phi_1,q)\vee \alpha(\phi_2,q)$.
	\end{itemize}
\end{definition}
We can easily prove the   following lemma.

\begin{lemma}
	\label{lem:moveP}
	Given a PTA\ $\TA$, $\psi\equiv \exists\Diamond \phi$,  and  a syntactic run of $\mathcal{A}$ 
	\[\tau=(q_0,I_{q_0})\xrightarrow{g_1\& a_1[u_1]}(q_1,I_{q_1}){\cdots} \xrightarrow{g_\ell\&a_\ell[u_\ell]}(q_\ell,I_{q_\ell})\] 
we overload the function notation $\alpha$ and define the encoded run $\alpha(\tau)$ to be 
		\[
	 (q_0,I_{q_0})\xrightarrow{g_1\&a_1[u_1]}(q_1{,}I_{q_1}){\cdots}\xrightarrow{g_\ell\&a_\ell[u_\ell]}(q_\ell{,}I_{q_\ell }{\wedge} \alpha(\phi{,}q_\ell))
\]
	Then $\tau$ satisfies $\psi$ under parameter valuation $\gamma$ if and only if $R(\mathcal{A}_{\alpha (\tau)}[\gamma])\neq \emptyset$.
\end{lemma}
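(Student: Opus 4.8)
The plan is to reduce the temporal statement ``$\tau$ satisfies $\exists\Diamond\phi$'' to the purely combinatorial reachability statement ``$R(\mathcal{A}_{\alpha(\tau)}[\gamma])\neq\emptyset$'' in two stages: first a semantic equivalence between satisfaction of the state formula $\phi$ at a concrete state and satisfaction of the encoded clock constraint $\alpha(\phi,q_\ell)$, and then a direct comparison of the complete untimed runs of $\mathcal{A}_\tau[\gamma]$ and $\mathcal{A}_{\alpha(\tau)}[\gamma]$. Throughout I read ``$\tau$ satisfies $\psi$ under $\gamma$'' as $\phi$ holding at the terminal state $(q_\ell,\omega_\ell)$ of some complete run realizing $\tau$; satisfaction of $\phi$ at an interior state of $\tau$ is subsumed by applying the lemma to the corresponding prefix run, which is itself a syntactic run ending at that state.

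First I would prove the core sub-claim: for the terminal location $q_\ell$ and any clock valuation $\omega$, the state $(q_\ell,\omega)$ satisfies $\phi$ if and only if $(\gamma,\omega)\models\alpha(\phi,q_\ell)$. This is established by structural induction on $\phi$ following the clauses defining $\alpha$. The atomic clock cases $x\prec e$, $-x\prec e$, and $x-y\prec e$ are immediate, since $\alpha$ leaves them unchanged and they do not mention the location. The genuinely new case is the location atom $q'$: here $(q_\ell,\omega)$ satisfies $q'$ exactly when $q_\ell=q'$, which matches $\alpha(q',q_\ell)$ being $true$ precisely when $q'$ is $q_\ell$ and $false$ otherwise. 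This is where the encoding does its real work, discharging the location component of $\phi$ into a truth value and thereby turning $\phi$ into a pure clock/parameter constraint. The Boolean cases close the induction because $\alpha$ commutes with $\neg,\wedge,\vee$ and satisfaction respects the same connectives.

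Next I would compare runs. By construction $\alpha(\tau)$ induces the same automaton as $\tau$ except that the invariant of the terminal location is strengthened from $I_{q_\ell}$ to $I_{q_\ell}\wedge\alpha(\phi,q_\ell)$; all actions, guards, updates, and the invariants of $q_0,\ldots,q_{\ell-1}$ are untouched. Consequently a sequence $(q_0,\omega_0)\xrightarrow{a_1}\cdots\xrightarrow{a_\ell}(q_\ell,\omega_\ell)$ is a complete untimed run of $\mathcal{A}_{\alpha(\tau)}[\gamma]$ if and only if it is a complete untimed run of $\mathcal{A}_\tau[\gamma]$ whose terminal state additionally satisfies $(\gamma,\omega_\ell)\models\alpha(\phi,q_\ell)$. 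Hence $R(\mathcal{A}_{\alpha(\tau)}[\gamma])\neq\emptyset$ holds if and only if $\mathcal{A}_\tau[\gamma]$ has a complete untimed run ending in a state $(q_\ell,\omega_\ell)$ with $(\gamma,\omega_\ell)\models\alpha(\phi,q_\ell)$, which by the sub-claim is equivalent to $(q_\ell,\omega_\ell)$ satisfying $\phi$, i.e.\ to $\tau[\gamma]\models\phi$.

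The step I expect to need the most care is ensuring the strengthened invariant affects only the terminal state: I must treat the positions $0,\ldots,\ell$ of $\mathcal{A}_\tau$ as distinct nodes of the underlying path automaton, so that even when a location repeats along $\tau$ the modification at the terminal position does not retroactively constrain an earlier occurrence. A secondary technical point is that $\alpha(\phi,q_\ell)$ may be a full Boolean combination rather than a conjunction of atomic constraints; I would either interpret the terminal invariant semantically or invoke the standard normalization that rewrites negated atomic constraints and splits disjunctive invariants into finitely many simple-constraint runs, applying the equivalence above to each disjunct. Both are routine and leave the derived equivalence intact.
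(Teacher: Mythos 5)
The paper does not actually prove this lemma---it is introduced with ``We can easily prove the following lemma'' and no argument is supplied---so there is no proof to compare yours against. Your two-stage argument (a structural induction showing that $(q_\ell,\omega)$ satisfies $\phi$ iff $(\gamma,\omega)\models\alpha(\phi,q_\ell)$, followed by the observation that $\mathcal{A}_{\alpha(\tau)}$ differs from $\mathcal{A}_\tau$ only in the strengthened terminal invariant, so complete runs of the former are exactly the complete runs of the latter whose last state satisfies the encoded constraint) is correct and is the argument the authors evidently have in mind. You also correctly flag the two places where the paper is loose: first, under the paper's literal definition ($\tau[\gamma]\models\phi$ iff \emph{some} state in $R(\mathcal{A}_\tau[\gamma])$ satisfies $\phi$), the lemma fails as stated when $\phi$ holds only at an interior state (e.g.\ $\phi$ a location atom naming an intermediate location), so the terminal-state reading with interior states handled by prefix runs is the necessary repair; second, $Q_\tau$ as defined collapses repeated locations, so one must index invariants by position to keep the modification at the end of $\tau$ from leaking backwards, and $\alpha(\phi,q_\ell)$ need not be a simple constraint (a point the paper acknowledges in the remark following the lemma). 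Your proof fills a gap in the paper rather than diverging from it.
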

 Notice the  term guard is slightly abused in the lemma as $\alpha(\phi,q_\ell)$  may have disjunctions, and thus it may not be a simple constraint.
 
 \subsection{Moving state  invariants to guards of transitions}
  It is easy to see that both the invariant $I_q$  in the pre-state of the transition and the guard $g$ in a transition step  $(q,I_q)\xrightarrow{g\&a[u]} (q',I_{q'})$ are both enabling conditions for the transition to take place.  Furthermore, the invariant $I_{q'}$  in the post-state of a transition needs to be guaranteed by the set of clock resets $u$. Thus we can also understand this constraint as a guard condition for the transition to take place (the transition is not allowed to take place if the invariant of the post-state is false). 
  
 For a PTA $\TA$ and a syntactic  run
	 \[\tau=(q_0,I_{q_0})\xrightarrow{g_1\&a_1[u_1]}(q_1{,}I_{q_1}){\cdots}\xrightarrow{g_\ell \&a_\ell[u_\ell]}(q_\ell ,I_{q_\ell}).\] 
	 Let $\overline{g}_i = (g_i\wedge I_{q_{i-1}}\wedge I_{q_i}[u_i])$. We define $\beta(\tau)$ as \[
		\begin{split}
	(q_0,true)\xrightarrow{\overline{g}_1\&a_1[u_1]}(q_1,true)\cdots
	\xrightarrow{\overline{g}_\ell \&a_\ell[u_\ell]}(q_\ell, true)
	\end{split}
\]	
\begin{lemma}	\label{lem:moveI}
	For a PTA $\TA$, parameter valuation $\gamma$ and  a syntactic  run
	 \[\tau=(q_0,I_{q_0})\xrightarrow{g_1\&a_1[u_1]}(q_1{,}I_{q_1}){\cdots}\xrightarrow{g_\ell \&a_\ell[u_\ell]}(q_\ell ,I_{q_\ell})\] 
we have $(\gamma, (0,\cdots, 0))\models I_{q_0}$ and $R(\mathcal{A}_{\beta(\tau)}[\gamma])\neq \emptyset$ if and only if  $R(\mathcal{A}_\tau[\gamma])\neq \emptyset $.
\end{lemma}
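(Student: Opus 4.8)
The plan is to notice that both $R(\mathcal{A}_\tau[\gamma])\neq\emptyset$ and $R(\mathcal{A}_{\beta(\tau)}[\gamma])\neq\emptyset$ assert exactly the same thing: the existence of a single \emph{complete} untimed run traversing the whole syntactic structure and reaching $(q_\ell,\cdot)$. Since $\tau$ and $\beta(\tau)$ share the same locations, actions and update sets, a candidate run in either LTS is pinned down by the identical data, namely a tuple of time advances $d_1,\ldots,d_\ell\ge 0$ inducing the clock valuations $\omega_0=(0,\ldots,0)$ and $\omega_i=(\omega_{i-1}+d_i)[u_i]$. Because this reset/advance dynamics is literally the same in both automata, I would take the correspondence between candidate runs to be the identity on $(d_i)$ and $(\omega_i)$; only the constraints a given $(d_i)$ must satisfy differ. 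The whole lemma then reduces to comparing two constraint sets. The one algebraic fact I would invoke repeatedly is the reset–substitution identity $I_{q_i}[u_i](\omega_{i-1}+d_i)=I_{q_i}((\omega_{i-1}+d_i)[u_i])=I_{q_i}(\omega_i)$, which is immediate from the definition $c[u](\omega)\deff c(\omega[u])$.

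Next I would unwind Definition~\ref{def:sem}. In $\mathcal{A}_\tau[\gamma]$, a complete run with advances $(d_i)$ is valid iff (a) $(\gamma,\omega_0)\models I_{q_0}$, so the initial state lies in $S_0$, and, for each $i$, (i) $(\gamma,\omega_{i-1}+d_i)\models g_i$, (ii) $(\gamma,\omega_{i-1}+d_i)\models I_{q_{i-1}}$ (membership of the post-advance state in $q_{i-1}$), and (iii) $(\gamma,\omega_i)\models I_{q_i}$ (membership of the post-reset state in $q_i$). In $\mathcal{A}_{\beta(\tau)}[\gamma]$ all invariants are $true$, so the only obligations come from the guards $\overline g_i=g_i\wedge I_{q_{i-1}}\wedge I_{q_i}[u_i]$ evaluated at $\omega_{i-1}+d_i$; by the substitution identity these split into precisely (i), (ii) and (iii) for each $i$, with \emph{no} initial-invariant clause. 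Hence the constraint set for $\beta(\tau)$ is exactly that for $\tau$ with (a) removed, which gives $R(\mathcal{A}_\tau[\gamma])\neq\emptyset \iff (\gamma,(0,\ldots,0))\models I_{q_0}\text{ and }R(\mathcal{A}_{\beta(\tau)}[\gamma])\neq\emptyset$. I would then render the two directions simply by transporting a witnessing run across the identity correspondence on $(d_i)$.

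The hardest part will be checking that the per-step obligations really do coincide, with no stray invariant check silently dropped. The delicate point is where $I_{q_{i-1}}$ must hold ``on entry'' to $q_{i-1}$, i.e. at $\omega_{i-1}$ right after the reset $u_{i-1}$: in $\mathcal{A}_\tau[\gamma]$ this is clause (iii) of step $i-1$, whereas in $\mathcal{A}_{\beta(\tau)}[\gamma]$ it is furnished by the $I_{q_{i-1}}[u_{i-1}]$ conjunct of the \emph{previous} guard $\overline g_{i-1}$, while the ``on exit'' check at $\omega_{i-1}+d_i$ comes from the $I_{q_{i-1}}$ conjunct of $\overline g_i$. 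Tracing this one-step index shift is what makes the two guard families line up, and it also pinpoints why exactly one obligation is left unmatched: for $i-1=0$ there is no earlier guard to carry the entry check, $\beta(\tau)$ having overwritten $I_{q_0}$ by $true$, so it must be reinstated as the explicit hypothesis $(\gamma,(0,\ldots,0))\models I_{q_0}$. Finally I would remark that, since every $I_q$ is a simple constraint and hence cuts out a convex region, verifying an invariant only at the two endpoints $\omega_{i-1}$ and $\omega_{i-1}+d_i$ of a time advance already secures it along the entire interval, so no intermediate obligation is lost on either side.
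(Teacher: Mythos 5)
Your proof is correct and follows essentially the same route as the paper's: both take the identical sequence of delays and clock valuations as the witnessing run in either LTS and show the per-step obligations coincide via the reset--substitution identity $I_{q_i}[u_i](\omega) = I_{q_i}(\omega[u_i])$, with the initial invariant $(\gamma,(0,\ldots,0))\models I_{q_0}$ left over as the one unmatched clause. Your treatment is in fact more careful than the paper's (which elides the index shift between the entry check furnished by $\overline{g}_{i-1}$ and the exit check furnished by $\overline{g}_i$), but there is no substantive difference in approach.
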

\begin{proof}
Assume $(\gamma, x=0)\models I_{q_0}$ and $R(\mathcal{A}_{\beta(\tau)}[\gamma])\neq \emptyset$.
There is run $\xi $ of $\mathcal{A}_{\beta(\tau)}[\gamma]$  which is an alternating sequence of instantaneous and time advance transition steps 
  \[\xi= (q_0{,}\omega_0)\xrightarrow{d_0} (q_0{,}\omega_0') \xrightarrow{a_1} (q_1{,}\omega_1) \cdots   \xrightarrow{a_\ell} (q_\ell{,}\omega_\ell)\]
  such that $(\gamma, \omega_i')\models g_{a_{i+1}}\wedge I_{q_{i}}\wedge I_{q_{i+1}}[u_{a_{i+1}}]$ and $\omega_{i+1}=\omega_i'[u_{a_i}]$
 for $i=0,\cdots, \ell-1$.  Hence,  by the definition of $\mathcal{A}_\tau[\gamma]$, $\xi$ is also a run of $\tau$ under $\gamma$, and thus $R(\mathcal{A}_\tau[\gamma])\neq \emptyset$. 
 
  For the ``if'' direction,  assume there is $\xi$ as defined above which is  a  run of $\tau$ for the  parameter valuation  $\gamma$.
 Then by the  definition of the concrete semantics,  we have $(\gamma, x=0)\models I_{q_0}$, $(\gamma, \omega_i')\models g_{a_{i+1}} \wedge I_{q_{i}} $
  and $(\gamma, \omega_i'[u_{a_{i+1}}])\models I_{q_{i+1}} $ for $i=0.\cdots, \ell-1$.	
  	 In other words,  $(\gamma, \omega_i')\models I_{q_{i+1}}[u_{a_{i+1}}] $ for $i=0.\cdots, \ell-1$.
  	 	 Therefore,  $(\gamma, (0,\cdots,0))\models I_{q_0}$ and
 $\xi$ is a run of $\beta(\tau)$ under $\gamma$, i.e., $R(\mathcal{A}_{\beta(\tau)}[\gamma])\neq \emptyset$. \QEDB
\end{proof}

\subsection{Generating semantic runs from syntactic runs}
We now define the notions of lower bound and upper bound of guards of transitions, and use them to construct feasible runs from  syntactic runs. 

For a PTA $\TA$, we use  $\textit{maxC}(\TA)$   to denote the maximum of the absolute values of the constant terms occurring in the linear expressions of $\mathcal{A}$, that is, 
 \[\textit{maxC}(\mathcal{A})\deff \textit{max}\{|\textit{con}(e)| \mid  e\in \textit{expr}(\mathcal{A})\}. \]  
For a property $\psi$, we use $\textit{maxV}(\psi)$  to denote the maximum of absolute value of the constants which occur in $\psi$, and we define the constant $C\deff 2\cdot \max\{\textit{maxC}({\TA}),\textit{maxV}(\psi)\}+2$.

\begin{lemma}
	\label{lem:one}
	For a one-one  PTA  $\mathcal{A}[x,p]$, a constant $T\ge C$ and a syntactic trace   $\tau=a_1\cdots a_\ell$ of $\TA[x,p]$ such that $R(\mathcal{A}_\tau [p=T])\neq \emptyset$, assume that $-x \prec  -e_1$ (or equivalently $ x\succ e_1$) and  $x\prec e_2$ are two conjuncts  of the guard  $a_i$ for some $i\in \{1, \ldots, \ell\}$. Then $\CF(e_1,p)\le \CF(e_2,p)$. 
\end{lemma}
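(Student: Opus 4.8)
The plan is to exploit the fact that the chosen value $T\ge C$ is large enough that the parameter term of any linear expression dominates its constant term. First I would unfold the hypothesis $R(\mathcal{A}_\tau[p=T])\neq\emptyset$: by definition of $R$ this gives a concrete untimed run of $\mathcal{A}_\tau[p=T]$ that takes every transition $a_1,\dots,a_\ell$, and in particular the transition labelled $a_i$. Hence there is a clock valuation $\omega$ at which $a_i$ fires and for which the entire guard holds; since both $x\succ e_1$ and $x\prec e_2$ are conjuncts, the single value $\omega(x)$ satisfies $\omega(x)\succ e_1[p{=}T]$ and $\omega(x)\prec e_2[p{=}T]$ simultaneously. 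Chaining these two relations through the common value $\omega(x)$ forces $e_1[p{=}T]\le e_2[p{=}T]$: a short case split over ${\succ}\in\{>,\ge\}$ and ${\prec}\in\{<,\le\}$ shows the comparison is strict unless both relations are non-strict, and in every case $\le$ holds.

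Next I would make the coefficient structure explicit. With a single parameter, $e_j[p{=}T]=\textit{con}(e_j)+\CF(e_j,p)\cdot T$, so writing $c_0=\textit{con}(e_1)$, $c_1=\CF(e_1,p)$, $d_0=\textit{con}(e_2)$, $d_1=\CF(e_2,p)$, the inequality $e_1[p{=}T]\le e_2[p{=}T]$ rearranges to $(c_1-d_1)\,T\le d_0-c_0$. Because $e_1$ and $e_2$ occur as guards of $\mathcal{A}$ (or arise from the encoded property), their constant terms are bounded by $\max\{\textit{maxC}(\mathcal{A}),\textit{maxV}(\psi)\}$ in absolute value, whence $d_0-c_0\le 2\max\{\textit{maxC}(\mathcal{A}),\textit{maxV}(\psi)\}=C-2$.

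Finally I would argue by contradiction. Suppose $\CF(e_1,p)>\CF(e_2,p)$; since the coefficients are integers ($c_1,d_1\in\ZZ$), this forces $c_1-d_1\ge 1$, and therefore $(c_1-d_1)\,T\ge T\ge C$. Combining with the bound above yields $C\le (c_1-d_1)\,T\le d_0-c_0\le C-2$, a contradiction. Hence $\CF(e_1,p)\le\CF(e_2,p)$, as claimed. The argument is short, and its whole force sits in the large-$T$ domination step: the integrality of the coefficients together with the slack of $2$ deliberately built into the definition of $C$ is exactly what closes the gap. The only place demanding care is the first step — confirming that simultaneous satisfiability really yields $e_1[p{=}T]\le e_2[p{=}T]$ across all four strict/non-strict combinations, and that the constants of $e_1,e_2$ are genuinely covered by the bound $\max\{\textit{maxC}(\mathcal{A}),\textit{maxV}(\psi)\}$ used to define $C$.
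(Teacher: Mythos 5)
Your proposal is correct and follows essentially the same route as the paper's proof: both argue by contradiction from $\CF(e_1,p)>\CF(e_2,p)$, use the witness clock value from the concrete run to force $e_1[p{=}T]\le e_2[p{=}T]$, and then use $T\ge C$ together with the bound on the constant terms to derive the contradiction. Your version is somewhat more explicit about the case split on strict versus non-strict inequalities and about the integrality of the coefficients, but the underlying argument is identical.
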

\begin{proof}  By contradiction. 
\begin{enumerate} 
\item For the parameter valuation  $\gamma=\{p=T\}$,  assume that the lemma does not hold, i.e.  $\CF(e_1,p)> \CF(e_2,p)$. 
\label{item1}
\item  We have $\CF(e_1-e_2,p)>0$. 
\item By the definition of $C$,  $C> |\textit{con}(e_1)|+|\textit{con}(e_1)|$. Then because $T\geq C$, we have  $T>  |\textit{con}(e_1)|+|\textit{con}(e_1)|$. 
\label{item2}
\item   Results \ref{item1})\&\ref{item2}) imply $(e_1-e_2)[\gamma]> 0$.
\label{item3}
\item However,  because  $R(\mathcal{A}_\tau[\gamma])\neq \emptyset$, we have a concrete untimed  run of $\mathcal{A}_\tau [p=T]$
\[(q_0{,}\omega_0){\xrightarrow{a_1}} (q_1{,} \omega_1){\cdots} (q_i{,}\omega_i){\xrightarrow{a_i}} (q_{i+1}, \omega_{i+1}){\cdots} {\xrightarrow{a_\ell} }(q_\ell{,} \omega_\ell)\]
 The guard of $a_i$ holds for $(p=T,\omega_{i})$. Thus, the two conjuncts of the guard of $a_i$ imply that $(e_1-e_2)[\gamma]\le 0$. This contradicts with the result~\ref{item3}).
\label{item4}
\end{enumerate}
	\QEDB
\end{proof}

\begin{corollary}
	\label{coro:up1}
	For a one-one  PTA  $\mathcal{A}[x,p]$, let  $\tau=a_1\cdots a_\ell$  be a syntactic trace of $\TA[x,p]$ such that there is $T\geq C$ for which  $R(\mathcal{A}_\tau [p=T])\neq \emptyset$.
	Assume that for some $i$ and $j$ such that $0\le i<j\le \ell $,  $-x\prec -e_1$  is a   conjunct of the guard of  $a_i$ and  $x\prec e_2$ is a conjunct of  the guard of $a_j$.
	Then,  $\CF(e_1,p)\le \CF(e_2,p)$ if the transitions $a_k$ for $k\in [i,j)$ do not reset any clock, i.e. their reset sets are empty.
\end{corollary}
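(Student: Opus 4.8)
The plan is to reuse the arithmetic core of Lemma~\ref{lem:one} and supply the one new ingredient needed to bridge the two distinct transitions $a_i$ and $a_j$: a monotonicity argument for the clock $x$ across the reset-free segment between them. First I would invoke the hypothesis $R(\mathcal{A}_\tau[p=T])\neq\emptyset$ to fix a concrete untimed run
\[(q_0,\omega_0)\xrightarrow{a_1}(q_1,\omega_1)\cdots\xrightarrow{a_\ell}(q_\ell,\omega_\ell)\]
of $\mathcal{A}_\tau[p=T]$, set $\gamma=\{p=T\}$, and let $v_i$ and $v_j$ be the values of $x$ against which the guards of $a_i$ and of $a_j$ are respectively evaluated (the $x$-coordinate reached after the time advance preceding each transition, before any reset of that transition). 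From the conjuncts of the two guards the semantics gives $e_1[\gamma]\le v_i$ (from $x\succ e_1$) and $v_j\le e_2[\gamma]$ (from $x\prec e_2$).

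The decisive step is to establish $v_i\le v_j$, and this is exactly where the empty-reset hypothesis on the transitions $a_k$, $k\in[i,j)$, is used. Between the evaluation of $a_i$'s guard and that of $a_j$'s guard, the clock $x$ is changed only by time-advance steps (which increase it) and by the resets of $a_i,\dots,a_{j-1}$ (which are assumed empty); hence $x$ is nondecreasing along this segment, so $v_i\le v_j$. Chaining, $e_1[\gamma]\le v_i\le v_j\le e_2[\gamma]$, i.e.\ $(e_1-e_2)[\gamma]\le 0$. Note it is essential that the hypothesis include $a_i$ itself: were $a_i$ to reset $x$, the value $v_i$ witnessing the lower bound would be discarded immediately and the monotonicity would break.

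With $(e_1-e_2)[\gamma]\le 0$ in hand, the conclusion follows by the same contradiction as in Lemma~\ref{lem:one}. Assuming $\CF(e_1,p)>\CF(e_2,p)$ and using that coefficients are integers gives $\CF(e_1-e_2,p)\ge 1$; together with $T\ge C>|\textit{con}(e_1)|+|\textit{con}(e_2)|$ this forces
\[(e_1-e_2)[\gamma]=\textit{con}(e_1-e_2)+\CF(e_1-e_2,p)\cdot T\ge T-(|\textit{con}(e_1)|+|\textit{con}(e_2)|)>0,\]
a contradiction, so $\CF(e_1,p)\le\CF(e_2,p)$.

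I expect the only real difficulty to be bookkeeping: after time-advance and instantaneous steps have been merged into the untimed run, I must say precisely which clock valuation each guard is tested against so that the inequalities $e_1[\gamma]\le v_i$, $v_j\le e_2[\gamma]$ and the monotonicity $v_i\le v_j$ all refer to compatible quantities. The degenerate index $i=0$, if admitted, should be read as the initial state, where $x=0$ and effectively $e_1\equiv 0$, so that $\CF(e_1,p)=0\le\CF(e_2,p)$ holds trivially.
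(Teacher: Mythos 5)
Your proof is correct and matches the argument the paper intends: the paper states this corollary without proof, clearly meaning it to follow from Lemma~\ref{lem:one}'s contradiction argument once one observes that the absence of resets on $a_i,\dots,a_{j-1}$ makes $x$ nondecreasing between the two guard evaluations, which is exactly the bridge you supply. Your added remarks (why $a_i$ itself must be reset-free, and the degenerate $i=0$ case) are accurate and fill in details the paper glosses over.
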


\begin{definition}[Order between  lower and upper bound constraints]
	\label{def:order}
	For two lower bounds $x\succ_1 e_1$ and $x\succ_2 e_2$ such that  $\succ_1,\succ_2\in \{>,\ge \}$ and a parameter $p$,
	we define that  the order $(x\succ_1 e_1) \sqsupseteq_p (x\succ_2  e_2)$ holds if one of the following  conditions holds. 
	\begin{enumerate}
		\item $\CF(e_1,p)> \CF(e_2,p)$;
		\item $\CF(e_1,p)= \CF(e_2,p) 	\wedge (\textit{con}(e_1)> \textit{con}(e_2))$;
		\item   $e_1 = e_2$ and  $x \succ_1 e_1 \equiv x >e_1$;
		\item $e_1 = e_2$ and $x \succ_2 e_2 \equiv x \ge e_2$.
	\end{enumerate}

	Symmetrically,  let $x\prec_1 e_1$ and $x\prec_2 e_2$ be two upper bounds such that $\prec_1,\prec_2\in \{<,\leq\}$.
	We define that $(x\prec_1 e_1)\sqsubseteq_p (x\prec_2  e_2)$ holds 
	 if one of the following conditions holds. 
	 \begin{enumerate}
	 	\item $\CF(e_1,p)< \CF(e_2,p)$;
	 \item $\CF(e_1,p)= \CF(e_2,p) 	\wedge (\textit{con}(e_1)< \textit{con}(e_2))$;
	 	\item    $e_1= e_2$ and   $(x \prec_1 e_1)\equiv (x < e_1)$;	 	
	 	\item  $e_1 = e_2$ and   $(x\prec_2 e_2) \equiv  (x\le e_2)$.
	 \end{enumerate}
\end{definition}

For a one-one PTA  $\TA[x,p]$, we define  the {\em effective lower bound} of a guard $g$, denoted by  $\ELB(g,p)$, as a syntactic term
\[
\ELB(g,p)\equiv \left \{ \begin{array}{llll} 
                      x>-1 &\mbox{no lower bound $x\succ e$ occurs in  $g$}\\
                     x\succ_1  e &\mbox{if }  x\succ_1  e  \mbox{ is one conjunct of }\\
                     & g  \mbox{ and for all $x\succ_2  e_1$ in $g$,}\\
                     &\mbox{$(x\succ_1  e)\sqsupseteq_p (x\succ_2  e_1)$}
                     \end{array}
                     \right .
                     \]
Symmetrically, we define  the {\em effective upper bound} of a  guard $g$
\[
\EUP(g,p)\equiv  \left \{ \begin{array}{llll} 
                      x<\infty &\mbox{no upper bound $x\prec e$  in $g$}\\
                     x\prec_1  e &\mbox{if }  x\prec_1  e  \mbox{ is one conjunct of } g \\
                      &\mbox{and for all $x\prec_2  e_1$ in $g$,}\\
                     &\mbox{$(x\prec_1  e)\sqsubseteq_p (x\prec_2  e_1)$}
                     \end{array}
                     \right .
                     \]

\begin{lemma}
	\label{lem:noUpdate}
Let $\TA[x,q]$   be a one-one PTA and  
	\[
	\tau=q_0\xrightarrow{g_1\&a_1}q_1\cdots \xrightarrow{g_\ell\& a_\ell} q_\ell\] 
a syntactic run which has no invariants for the locations and no reret sets for the actions.
  
  If there is a $T\ge C$ such that   $R(\mathcal{A}_\tau[p=T])\neq \emptyset$,   $\ELB(g_i,p)\wedge \EUP(g_j,p)$ holds for each valuation $p=t$ such that $t\in[C,\infty)$, where  $g_i$ and $g_j$  are the guards of  $a_i$   and $a_j$  for $i,j \in \{1,\ldots,\ell\}$ such that $i\le j$, respectively.
\end{lemma}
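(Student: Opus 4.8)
The plan is to reduce the statement to a comparison of the two affine expressions occurring in the effective bounds, and then to exploit that the constant $C$ dominates every constant term of $\TA$. Write $\ELB(g_i,p)\equiv x\succ_1 e_1$ and $\EUP(g_j,p)\equiv x\prec_1 e_2$ with $\succ_1\in\{>,\ge\}$ and $\prec_1\in\{<,\le\}$. To say that $\ELB(g_i,p)\wedge\EUP(g_j,p)$ holds at $p=t$ is to say that $\{x\ge 0\mid x\succ_1 e_1[t]\ \wedge\ x\prec_1 e_2[t]\}\neq\emptyset$, i.e.\ that $e_1[t]$ sits below $e_2[t]$ (strictly or not, according to $\succ_1,\prec_1$) leaving room for a nonnegative $x$. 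First I would dispatch the degenerate bounds supplied by the definitions of $\ELB$ and $\EUP$: if $g_j$ has no upper bound then $\EUP(g_j,p)\equiv x<\infty$ and the conjunction is trivially satisfiable; if $g_i$ has no lower bound then $\ELB(g_i,p)\equiv x>-1$ and one only needs $e_2[t]$ to stay above $-1$ with a nonnegative value still admissible, which the sign and size estimates below provide. So the essential case is that both effective bounds are genuine conjuncts, of $g_i$ and $g_j$ respectively.

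Next I would unfold the hypothesis $R(\mathcal{A}_\tau[p=T])\neq\emptyset$ into a concrete untimed run and use the no-reset assumption crucially. Because no transition resets a clock, the value of $x$ only increases along the run: if $v_k$ denotes the value of $x$ at which $g_k$ is evaluated, then $0\le v_1\le v_2\le\cdots\le v_\ell$. Since $g_i$ and $g_j$ hold on this run, $v_i\succ_1 e_1[T]$ and $v_j\prec_1 e_2[T]$, and together with $v_i\le v_j$ (as $i\le j$) this gives that $e_1[T]$ lies below $e_2[T]$ with the strictness dictated by $\succ_1,\prec_1$. The same relation $v_j\prec_1 e_2[T]$ with $v_j\ge 0$ forces $e_2[T]\ge 0$; as $e_2[T]=\CF(e_2,p)\,T+\textit{con}(e_2)$ and $T\ge C>2\,\textit{maxC}(\TA)\ge 2|\textit{con}(e_2)|$, a coefficient $\CF(e_2,p)\le -1$ would yield $e_2[T]\le -T+\textit{con}(e_2)<0$, a contradiction. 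Hence $\CF(e_2,p)\ge 0$, and the same estimates give $e_2[t]\ge 0$ for every $t\ge C$, so a nonnegative witness for $x$ is always available once the bounds are consistent.

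With this in hand I would apply Corollary~\ref{coro:up1} when $i<j$, and Lemma~\ref{lem:one} when $i=j$; their hypotheses hold because no transition resets a clock, and they yield $\CF(e_1,p)\le\CF(e_2,p)$. Put $m=\CF(e_2,p)-\CF(e_1,p)\ge 0$ and $b=\textit{con}(e_2)-\textit{con}(e_1)$, so that $e_2[t]-e_1[t]=m\,t+b$ with $|b|\le 2\,\textit{maxC}(\TA)<C$. If $m\ge 1$, then for all $t\ge C$ we get $e_2[t]-e_1[t]\ge t-|b|\ge C-|b|>0$, so the two bounds are strictly consistent regardless of the relation symbols. If $m=0$, then $e_2[t]-e_1[t]=b$ does not depend on $t$, and the comparison of $e_1[T]$ and $e_2[T]$ already proved fixes the sign of $b$: it is strictly positive unless both bounds are non-strict, in which case $b\ge 0$, and when moreover $b=0$ we have $e_1=e_2$ and $x=e_1[t]\ge 0$ witnesses satisfiability. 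In every case $\ELB(g_i,p)\wedge\EUP(g_j,p)$ holds for all $t\in[C,\infty)$.

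The clean core of the argument is the coefficient comparison of Corollary~\ref{coro:up1} combined with the slack $C>2\,\textit{maxC}(\TA)$, which renders the constant terms negligible for $t\ge C$; I expect the genuine obstacle to be purely the boundary bookkeeping. Specifically, one must get the strict/non-strict relations exactly right in the degenerate case $m=0,\ e_1=e_2$, where it is precisely the clauses of Definition~\ref{def:order} for equal expressions that forbid an inconsistent effective pair such as $x>e$ together with $x<e$, and one must justify carefully the monotonicity $v_i\le v_j$ from the no-reset hypothesis together with the merged time-advance/action semantics. These points are routine but need to be spelled out.
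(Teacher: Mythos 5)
Your proposal is correct and follows essentially the same route as the paper's own proof: extract a concrete run at $p=T$, use the no-reset hypothesis to get monotonicity of the clock value along the run, invoke Corollary~\ref{coro:up1} to compare the coefficients of the two effective bounds, and use $C>2\,\textit{maxC}(\TA)$ to transfer the inequality between the two expressions from $p=T$ to every $t\ge C$. Your version is in fact somewhat more careful than the paper's (it treats the degenerate bounds $x>-1$ and $x<\infty$, the case $i=j$ via Lemma~\ref{lem:one}, and the nonnegativity of the witness for $x$ explicitly), but these are refinements of the same argument rather than a different approach.
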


\begin{proof}
Assume $R(\mathcal{A}_\tau[p=T])\neq \emptyset$. Then there is a concrete run of  $\mathcal{A}_\tau[p=T]$ 
	\[\xi = (q_0,\omega_0)\xrightarrow{a_1}(q_1,\omega_1)\cdots \xrightarrow{a_\ell}(q_\ell, \omega_\ell)\]
	Since for all $i,j\in \{1,\ldots, \ell\}$, the transition by $a_i$ does not reset the clock $x$, $\omega_i(x)\leq \omega_j(x)$ if $i\leq j$. Let us set  $\EUP(g_i,p)\equiv x\prec_i e_i$  and $\ELB(g_j,p)\equiv x\succ_j e_j$.
	Since  $\omega_i(x)\le \omega_j(x)$, we have 
	\begin{itemize}
		\item if $\prec_i$ is $<$,   $e_i[p=T]<e_j[p=T]$, else
		\item  if $\succ_j$ is $>$,  $e_i[p=T]<e_j[p=T]$, else
		\item  $e_i[p=T]\le e_j[p=T]$.
	\end{itemize}
	We now make the following two claims:
\begin{enumerate}
\item  if $e_i[p=T]<e_j[p=T]$, $e_i[p=T']<e_j[p=T']$ for $T'\in [C,\infty)$;  and
\item  if $e_i[p=T]\le e_j[p=T]$,  $e_i[p=T'']\le e_j[p=T'']$ for $T''\in [C,\infty)$. We prove these two claims below.
\end{enumerate}
We prove these two claims as follows.
	\begin{enumerate}
		\item In the case when   $e_i[p=T]<e_j [p=T]$, we have $\CF(\ELB(g_i,p),p)\le  \CF(\EUP(g_j,p),p )$ according to {\em Corollary \ref{coro:up1}}.
Hence, $\CF(e_j-e_i,p)\ge0$.  In case when $\CF(e_i-e_j,p)=0$,  $e_i[p=T']<e_j[p=T']$  for $T'\in [C,\infty)$.
		If $\CF(e_j-e_i,p)>0$,  $(e_j-e_i)[p=T']\ge T'-|con(e_i)|- |con(e_i)|>0$ for $T'\in [C,\infty)$. Hence, Claim 1) holds.
		\item The proof for Claim 2) in the  case  when $e_i[p=T]\le e_j[p=T]$, is the same.
\end{enumerate}
	Based on these claims, we prove that for  $T'\in [C,\infty)$.
	\begin{itemize}
		\item if $\prec_i$ is the  relation $<$,   $e_i[p=T']>e_j[p=T']$, else
		\item  if $\succ_j$ is the relation $>$,   $e_i[p=T']>e_j[p=T']$, else
		\item   $e_i[p=T']\ge e_j[p=T']$.
	  	\end{itemize} 
		 Therefore, formula  $\ELB(g_i,p)\wedge \EUP(g_j,p)$ is feasible for $T'\in [C,\infty)$.
	\QEDB
\end{proof}

\begin{definition}[From syntactic to feasible timed run]
For a guard, we define
\[\pr(g,T)\deff \left\{
	\begin{array}{@{}ll@{}} max\{0,e[p=T]\}, \mbox{ if } \ELB(g,p)\equiv x\geq  e   \\
	max\{0,e[p=T]+1\},  \mbox{ if }  \ELB(g,p)\equiv x >e
	\end{array}\right.
	\]
Given a simple  syntactic run which have no location invariants and clock resets 
$\tau=q_0\xrightarrow{g_1\&a_1}q_1\cdots \xrightarrow{g_\ell\&a_\ell }q_\ell$, let
	
	\[
	\PR(\tau, T)\deff (q_0,w_0)\xrightarrow{a_1}(q_1,\omega_1)\cdots \xrightarrow{a_\ell} (q_\ell, \omega_\ell)\]
such that 
	\[
	\omega_i=
	\left\{
	\begin{array}{@{}ll@{}}
	0, & \mbox{if }i=0,\\
	\pr(g,T),  &\mbox{otherwise}.
	\end{array}\right.
	\]
\end{definition}
We now provide Algorithm \ref{the:cvu} for the generation of a feasible timed run from a simple syntactic run. 
\begin{algorithm}[!htb]
	\DontPrintSemicolon
	\SetKwData{Left}{left}\SetKwData{This}{this}\SetKwData{Up}{up}
	\SetKwFunction{Union}{Union}\SetKwFunction{FindCompress}{FindCompress}
	\SetKwInOut{Input}{input}\SetKwInOut{Output}{output}

	\Input{ A simple syntactic run of  $\TA[x,p]$  $\tau=q_0a_1q_1\cdots a_\ell q_\ell$ such that there is a $T_1\ge C$ and $R(\tau[p=T_1])\neq \emptyset$; an integer  $T\ge C$.}
	\Output{ A run $\xi=s_0d_0s_0'a_1s_1\cdots d_{\ell-1}s_{\ell-1}'a_\ell s_\ell $ is a run of $\TA[p=T]$. }

	\SetAlgoLined
	\BlankLine
	Set  $s_0a_1s_1\cdots a_\ell s_\ell =\PR(\tau, T)$ where $s_i=(q_i, \omega_i)$\;
	\label{init}
	Set  $\xi_1=s_0a_1s_1\cdots a_\ell s_\ell $\;
	\While{$\xi_2$ is not a simple feasible sequence}
	{	\label{body}
		\For{ $i\in [1,\ell]$  }{	
			\If{ $\omega_i<\omega_{i-1}$}{
				Set $\omega_i=\omega_{i-1}$;
				\label{update}
			}
		}
	}
	\For{ $i\in [1,\ell]$  }
	{
		$d_{i-1}=\omega_i-\omega_{i-1}$;
		$s_{i-1}'=(q_{i-1},\omega_i)$;
		
	} 
	\Return{$s_0d_0s_0'a_1s_1\cdots d_{\ell-1}s_{\ell-1}'a_\ell s_\ell$\;}
	\caption{GSR (Generate Simple Feasible Run)\label{the:cvu}}
\end{algorithm}

\begin{lemma}
	\label{lem:term}
	Algorithm \ref{the:cvu} terminates within a finite number of  steps. When it terminates, $(p=T,\omega_i)\models g_i$ holds for the output   $\xi=s_0d_0s_0'a_1s_1\cdots d_{\ell -1}s_{\ell-1}'a_\ell s_\ell$,  where $g_i$ is the guard of transition $a_i$ for $i=1,\cdots, \ell$.
\end{lemma}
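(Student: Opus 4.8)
The plan is to prove the two assertions separately: first termination, then guard satisfaction, the latter being the substantive part.

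For termination, I would first observe that the \textbf{while}-loop is nothing more than an in-place computation of the prefix maximum of the sequence $(\omega_0,\ldots,\omega_\ell)$ produced by $\PR(\tau,T)$. A single execution of the inner \textbf{for}-loop replaces each $\omega_i$ (processed in increasing order of $i$) by $\max\{\omega_{i-1},\omega_i\}$, so after one pass $\omega_i=\max\{\,0,\pr(g_1,T),\ldots,\pr(g_i,T)\,\}$ (recall $\omega_0=0$). This sequence is non-decreasing, hence a simple feasible sequence, so the loop guard becomes false and the algorithm exits after at most one further check. Thus the algorithm halts in $O(\ell)$ steps.

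For guard satisfaction I would reduce $(p{=}T,\omega_i)\models g_i$ to checking the two effective bounds. Since $\sqsupseteq_p$ and $\sqsubseteq_p$ rank lower/upper conjuncts by coefficient, then constant, then strictness, for every valuation $p=T\ge C$ the bound $\ELB(g_i,p)$ is the numerically largest lower conjunct of $g_i$ and $\EUP(g_i,p)$ the numerically smallest upper conjunct; hence it suffices to show the final $\omega_i$ satisfies $\ELB(g_i,p)$ and $\EUP(g_i,p)$ at $p=T$. The lower-bound part is immediate: after termination $\omega_i\ge\pr(g_i,T)$, and $\pr(g_i,T)$ is by definition the least non-negative value meeting $\ELB(g_i,p)$, while raising a clock value never invalidates a lower bound. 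The upper-bound part is the crux. Writing $\omega_i=\pr(g_j,T)$ for the maximizing index $j\le i$ (the case $\omega_i=0$ being analogous with $j=i$), I would invoke Lemma \ref{lem:noUpdate}: since $\tau$ has no resets and $R(\mathcal{A}_\tau[p{=}T_1])\neq\emptyset$ for some $T_1\ge C$, the conjunction $\ELB(g_j,p)\wedge\EUP(g_i,p)$ is feasible over the non-negative clock values at every $p=T\in[C,\infty)$. The feasible set for $x$ is an interval whose left endpoint is exactly $\pr(g_j,T)$, so non-emptiness forces that endpoint to satisfy $\EUP(g_i,p)$, and thus every upper conjunct of $g_i$. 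Together with the lower-bound part this yields $(p{=}T,\omega_i)\models g_i$.

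I expect the main obstacle to lie precisely in this last step, in the interplay between strict and non-strict bounds and the $+1$ offset built into $\pr$ for strict lower bounds. When $\ELB(g_j,p)\equiv x>e$, the chosen value $\pr(g_j,T)=\max\{0,e[p{=}T]+1\}$ must still fall strictly below a strict effective upper bound; establishing this requires exploiting the integrality of $e[p{=}T]$ (integer coefficients and $T\in\NN$) together with the coefficient inequality $\CF(e_{\mathrm{lb}},p)\le\CF(e_{\mathrm{ub}},p)$ supplied by Corollary \ref{coro:up1}, so that for $T\ge C$ the gap between the earlier effective lower bound and the later effective upper bound is preserved and sufficient. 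A clean case analysis over the four combinations of strictness, anchored on the feasibility granted by Lemma \ref{lem:noUpdate}, should close this gap.
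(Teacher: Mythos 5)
Your proof is correct and follows essentially the same route as the paper's: you observe that each final $\omega_i$ equals $\pr(g_j,T)$ for some $j\le i$, that lower bounds are preserved under increasing the clock value, and that the upper bound $\EUP(g_i,p)$ is met because Lemma~\ref{lem:noUpdate} makes $\ELB(g_j,p)\wedge\EUP(g_i,p)$ feasible on $[C,\infty)$. Your termination argument (one prefix-maximum pass) is tighter than the paper's monotone-increase-with-upper-bound invariant, and the strict-bound/integrality subtlety around the $+1$ offset in $\pr$ that you flag is real --- the paper's proof silently asserts that $\pr(g_k,T)$ is ``the minimum value of $x$'' satisfying $g_k$, which over the reals is only justified by the integrality argument you sketch.
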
  

\begin{proof}

To prove the termination, let $M=\max\{w_i\mid i=0,\cdots, \ell\}$. It is easy to see that $(\omega_0\le M\wedge \cdots,\omega_\ell \le M)$ is an invariant of the while loop, i.e. it holds when the execution enters line \ref{body}.  Each iteration of the loop body increase at least one of $\omega_i$ by the execution of the statement of line \ref{update}. Hence, the algorithm terminates within a finite number of iterations, as otherwise the invariant would be falsified.
		
We prove the correctness of the algorithm by contradiction. Assume that there is an $i\in\{1,\ldots,\ell\}$  such that   $(p=T,\omega_i)\not \models g_i$. By the definition of $\PR(\tau,T)$, $(T,\omega_i)\models g_i$ holds after the execution of the statement in  line \ref{init} of {\em Algorithm \ref{the:cvu}}.

It is noticed that $\omega_i$ is possibly changed only by the statement $\omega_i:=\omega_{i-1}$ in  line \ref{update}, which increases $\omega_i$.  Therefore, when the algorithm  terminates for any $i\in\{1,\ldots, \ell\}$, $\omega_i =  \omega_k^0$  for some $k\leq i$, where $\omega_k^0$ is the initial   value of $\omega_k$. Assume $(T,\omega_i)\models g_i$ does not hold, that is, $(T,\omega_i)\not \models g_i$ holds.   This implies that   $ (T,\omega_i)\not \models \EUP(g_i,p)$. By the definition of $\pr(g_k,T)$, $\pr(g_k,T)=\omega_k^0$ is the minimum value  of $x$ which make formula $(T, x)\models g_k$ hold.
	According to  {\em Lemma \ref{lem:noUpdate}}, formula  $\ELB(g_k,p)\wedge \EUP(g_i,p) $ is feasible for $p\in[C,\infty)$.  Because $\EUP(g_i,p)$ is an upper bound constraint and $\omega_i=\omega_k^0$, $(T, \omega_i)\models \ELB(g_k,p)\wedge \EUP(g_i,p)$. Which
	contradicts with the assumption  that $ (T, \omega_i)\models  \EUP(g_i,p) $ does not hold. Therefore  $(T, \omega_i)\models  \EUP(g_i,p) $ must hold. This implies  $(T,\omega_i)\models g_i$   holds for $i=1,\cdots, \ell$. \QEDB
\end{proof}

\begin{lemma}
	\label{the:oneS}
	Let $\tau$ be a simple syntactic  run of a one-one PTA $\mathcal{A}[x,p]$.   We have $R(\mathcal{A}_\tau[p=T])\neq \emptyset$ for any $T\ge C$ if there is a $T_1\ge C$ such that $R(\mathcal{A}_\tau[p=T_1])\neq \emptyset$.
\end{lemma}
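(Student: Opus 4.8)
The plan is to prove the statement constructively: for every $T\ge C$ I would exhibit a concrete run of $\mathcal{A}_\tau[p=T]$, thereby witnessing $R(\mathcal{A}_\tau[p=T])\neq\emptyset$, and this run is precisely the one produced by Algorithm \ref{the:cvu} and certified by Lemma \ref{lem:term}. The role of the threshold $C$ is that the feasibility of a \emph{reset-free} run exhibits a ``plateau'': a single witness $T_1\ge C$ freezes the truth values of every relevant lower/upper bound comparison throughout $[C,\infty)$, so the same realizable assignment pattern persists for all $T\ge C$. Note that $T$ need not be $\ge T_1$; the argument works uniformly on the whole range because it is driven by coefficient signs rather than by the exact value of $T$.

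First I would reduce to the hypotheses under which Algorithm \ref{the:cvu} and Lemmas \ref{lem:noUpdate}, \ref{lem:term} are stated, namely a run with no location invariants and no clock resets. Since $\tau$ is \emph{simple} it performs no resets, and by Lemma \ref{lem:moveI} I may fold the invariants into the guards, replacing $\tau$ by $\beta(\tau)$ (which is still reset-free, hence a legitimate input to the algorithm). Concretely, Lemma \ref{lem:moveI} gives, for every valuation $p=t$,
\[
R(\mathcal{A}_\tau[p=t])\neq\emptyset \iff \bigl((p=t,(0,\cdots,0))\models I_{q_0}\bigr)\ \wedge\ R(\mathcal{A}_{\beta(\tau)}[p=t])\neq\emptyset .
\]
Applied to the hypothesised $T_1$ this splits the assumption into (i) $(p=T_1,(0,\cdots,0))\models I_{q_0}$ and (ii) $R(\mathcal{A}_{\beta(\tau)}[p=T_1])\neq\emptyset$, and reduces the goal, for an arbitrary $T\ge C$, to re-establishing the two conjuncts at $p=T$.

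For conjunct (ii) I would simply run Algorithm \ref{the:cvu} on $\beta(\tau)$ with target $T$: its hypotheses hold ($\beta(\tau)$ is simple, invariant-free, reset-free, and feasible at $T_1\ge C$), so by Lemma \ref{lem:term} it terminates and returns a run all of whose guards hold at $p=T$, giving $R(\mathcal{A}_{\beta(\tau)}[p=T])\neq\emptyset$. This is where the substantive work is already absorbed, since Lemma \ref{lem:term} rests on Lemma \ref{lem:noUpdate} --- exactly the plateau statement that the effective bounds stay jointly feasible for all $t\in[C,\infty)$ once they are feasible at one such value. For conjunct (i) I would argue directly that the truth of $(p=t,(0,\cdots,0))\models I_{q_0}$ is constant on $[C,\infty)$: each atomic conjunct of $I_{q_0}$ at $x=0$ is a comparison $0\prec e[p=t]$ or $0\succ e[p=t]$ with $e=c_0+c_1p$ and $|c_0|\le\textit{maxC}(\mathcal{A})$, and since $C\ge 2\,\textit{maxC}(\mathcal{A})+2$ we have $|c_1 t|>|c_0|$ whenever $c_1\neq0$, so the sign of $e[p=t]$ (and hence the truth of the comparison) is determined by $c_1$ alone for every $t\ge C$; when $c_1=0$ it is the constant $c_0$. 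In every case the truth value at $T$ agrees with that at $T_1$, so (i) transfers. Combining (i) and (ii) and invoking Lemma \ref{lem:moveI} once more yields $R(\mathcal{A}_\tau[p=T])\neq\emptyset$.

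I expect the main obstacle to be bookkeeping rather than a new idea: the conceptual content --- that above $C$ nothing about the realizability of a reset-free run can change --- is already encapsulated by Lemmas \ref{lem:noUpdate} and \ref{lem:term}. The one place that needs genuine (if routine) care is conjunct (i), because the initial-state invariant is exactly the piece that $\beta$ deliberately keeps \emph{outside} the guards of $\beta(\tau)$; it is therefore not covered by the run returned from Algorithm \ref{the:cvu} and must be propagated from $T_1$ to $T$ by a separate appeal to the choice of $C$, mirroring the coefficient-versus-constant comparison in the proof of Lemma \ref{lem:noUpdate}.
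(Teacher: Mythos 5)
Your proof is correct and follows essentially the same route as the paper, whose entire proof of this lemma is the single sentence that Algorithm \ref{the:cvu} generates the required run (certified by Lemma \ref{lem:term}, which in turn rests on Lemma \ref{lem:noUpdate}). Your additional care in folding invariants into guards via $\beta$ and separately propagating the initial-location invariant from $T_1$ to $T$ using the choice of $C$ fills in details the paper leaves implicit, but it is a refinement of the same argument rather than a different one.
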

\begin{proof}
{\em Algorithm \ref{the:cvu}}  generates a run of $\mathcal{A}_\tau[p=T]$ for  simple syntactic  run $\tau$. \QEDB
\end{proof}

\begin{lemma}
	\label{lem:new1}
	Let $\TA[x,p]$ be a one-one  PTA  and $\psi$  a formula of the form  $\exists \Diamond \phi$.   Then,    $\TA[p=T]\models  \psi$ for all  $T\ge C$  if there is a $T_1\ge C$ such that  $\TA[p=T_1]\models \psi$,.
\end{lemma}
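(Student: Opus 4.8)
The plan is to transfer a single witnessing syntactic run from the value $T_1$ to every $T\ge C$, using the reductions of {\em Lemma \ref{lem:moveP}} and {\em Lemma \ref{lem:moveI}} together with the monotonicity result for simple runs, {\em Lemma \ref{the:oneS}}. First I would fix a witness: since $\psi\equiv\exists\Diamond\phi$ and $\TA[p=T_1]\models\psi$, there is a syntactic run $\tau$ ending in some location $q_\ell$ with $\tau[p=T_1]\models\phi$, so by {\em Lemma \ref{lem:moveP}} we have $R(\mathcal{A}_{\alpha(\tau)}[p=T_1])\neq\emptyset$. It then suffices to prove $R(\mathcal{A}_{\alpha(\tau)}[p=T])\neq\emptyset$ for this \emph{same} $\tau$ and every $T\ge C$, since {\em Lemma \ref{lem:moveP}} immediately yields $\TA[p=T]\models\psi$.

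Next I would reduce $\alpha(\tau)$ to a run with simple-constraint guards and no invariants. The encoded guard $\alpha(\phi,q_\ell)$ need not be simple, so I would write it in disjunctive normal form $\bigvee_k D_k$, where each $D_k$ is a conjunction of atomic constraints; the witness at $T_1$ satisfies one disjunct $D_{k_0}$. After applying {\em Lemma \ref{lem:moveI}} to fold every invariant $I_{q_{i-1}}$ and $I_{q_i}[u_i]$ into the guards, and replacing the last guard by $\overline{g}_\ell\wedge D_{k_0}$, I obtain a syntactic run $\tau'$ with simple guards, no location invariants, $R(\mathcal{A}_{\tau'}[p=T_1])\neq\emptyset$, and the recorded initial requirement $(p=T_1,(0,\cdots,0))\models I_{q_0}$.

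The remaining and principal difficulty is that $\tau'$ may reset the parametric clock $x$, so it is not a simple run and {\em Lemma \ref{the:oneS}} does not apply directly. I would overcome this by cutting $\tau'$ at its resets of $x$ into maximal reset-free segments. On each segment $x$ evolves monotonically, so the segment is a simple run whose initial $x$-value is a concrete integer (either $0$ or a reset target $b\in\NN$) and hence independent of the parameter. {\em Lemma \ref{the:oneS}}, together with the sign-stability of linear expressions for $p\ge C$ established in {\em Lemma \ref{lem:noUpdate}} and {\em Corollary \ref{coro:up1}}, then transfers feasibility of each segment from $T_1$ to every $T\ge C$; because the reset targets are concrete, the segments continue to glue consistently at the reset boundaries for all $T\ge C$. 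I would then check that the initial condition transfers as well: since $C>2\,\textit{maxC}(\TA)$, for any atomic conjunct of the form $0\prec e[p=T]$ of $I_{q_0}$ the truth value for $T\ge C$ is fixed by $\CF(e,p)$ alone, so its holding at $T_1\ge C$ forces it to hold at every $T\ge C$. Combining the segment-wise transfer with the initial condition gives $R(\mathcal{A}_{\tau'}[p=T])\neq\emptyset$, hence $R(\mathcal{A}_{\alpha(\tau)}[p=T])\neq\emptyset$, for all $T\ge C$.

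I expect the reset decomposition --- bridging from the reset-free statement of {\em Lemma \ref{the:oneS}} to a run that resets $x$ --- to be the main obstacle, since it is exactly the point where the monotonicity of $x$ underlying {\em Lemma \ref{lem:noUpdate}} and the $\PR$/GSR construction can fail and must be restored locally on each segment. The disjunction handling and the transfer of the initial-invariant condition are comparatively routine consequences of the choice of $C$.
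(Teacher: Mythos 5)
Your proposal is correct and follows essentially the same route as the paper: the paper likewise folds the invariants into the guards via Lemma~\ref{lem:moveI} and then handles resets of $x$ by induction on their number, which unrolls to exactly your decomposition into maximal reset-free segments glued at concrete reset values, with Lemma~\ref{the:oneS} transferring feasibility of each segment to every $T\ge C$. If anything, your explicit treatment of the disjunctive encoded guard $\alpha(\phi,q_\ell)$ and of the initial-invariant condition is more careful than the paper's own argument, which glosses over both points.
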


\begin{proof}
	For the parameter valuation  $\gamma=\{p=T_1\}$, let  $\phi=q_\ell$ without the loss  of generality, as the proof will be similar when $\phi$ is in other forms. Assume $\TA[p=T_1]\models \psi$, we  need to prove that  $\TA[p=T]\models  \psi$ for any  $T\ge C$.
	
Let $\tau=q_0\xrightarrow{a_1}q_1\cdots \xrightarrow{a_\ell} q_\ell$  be a syntactic   run of $\mathcal{A}[p=T_1]$  that satisfies property $\psi$, i.e. $R(\mathcal{A}_\tau[p=T])\neq \emptyset$. According to {\em Lemma \ref{lem:moveI}},
	we  obtain an untimed  run $\beta(\tau)$ without location invariant  which satisfies that  $R(\mathcal{A}_{\beta(\tau)}[p=T_1])\neq \emptyset$ if and only if   $R(\mathcal{A}_\tau[p=T_1])\neq \emptyset$.   Without the loss  of generality, set
	\[ \beta(\tau) =q_0\xrightarrow{g_1\&a_1[u_1]}q_1\cdots \xrightarrow{g_{\ell}\&a_{\ell}[u_{\ell}]}q_{\ell}.\]
	
	Let $k$ be the number of transitions in $\beta(\tau)$ which reset the clock $x$. We prove the lemma by induction on $k$.

For $k=0$,  $\beta(\tau)$ is a simple untimed run and $R(\TA_{\beta(\tau)}[p=T])\neq \emptyset$ follows from {\em Lemma \ref{the:oneS}}.

We assume $R(\TA_\tau[p=T])\neq \emptyset$ holds for $k_0$ and  let $k=k_0+1$. Assume $a_i$ is the action for the last transition in $\beta(\tau)$ that resets the clock $x$, and let 
\[\tau_2=q_0\xrightarrow{g_1\&a_1[u_1]}q_1\cdots \xrightarrow{g_{i}\&a_{i}[u_{i}]}q_{i}\]
 We obtain the sequence $\tau_2'$
\[\tau_2'=q_0\xrightarrow{g_1\&a_1[u_1]}q_1\cdots \xrightarrow{g_i\& a_i}q_i\]
by removing the last reset set $u_i$ of $\tau_2$. 

By the induction hypothesis and $\tau_2'$ has $k_0$ transitions that modify  $x$, $R(\mathcal{A}_{\tau_2'}[p=T])\neq \emptyset$  for any $T\ge C$.   Since $\tau_2'$ is the same as $\tau_2'$ excpet the reset set of last transition,  $R(\mathcal{A}_{\tau_2}[p=T])\neq \emptyset$ for any $T\ge C$.
The value of clock $x$  is a fixed value when the run reaches $q_i$ in $\tau_2$, sine  there is a reset of
$x$ in transition $a_i$. And this value is the same as the value of clock $x$
when  the run reaches $q_i$ in the run $\beta(\tau)$ under the parameter valuation $p=T_1$.  As there is 
no reset of $x$ in transitions $a_{i+1},\cdots, a_\ell$. Hence,
$R(\mathcal{A}_{\beta({\tau})}[p=T])\neq \emptyset$ for $T\ge C$.

	\QEDB
\end{proof}
\subsection{The proof of the main theorem}
We can now prove  {\bf Theorem}~\ref{th:one-one-PTA} of this section. 
\begin{proof}
	Assume that  $\TA[p=C]\models \psi$.  Following  Lemma \ref{lem:new1}, we initially start with the subset of parameter valuations $H=\{C,C+1,\cdots\}\subseteq \Gamma(\TA,\psi)$. We then  
	iteratively check if $\TA[p=i]\models \psi$  holds for $i=0.\cdots,C-1$ and add  to $H$ those $i$'s such that $\TA[p=i]\models \psi$ holds. This procedure terminates with $H = \Gamma(\TA,\psi)$. \QEDB
\end{proof}

\begin{corollary}
	\label{coro:new5}
	Let $\TA[x,p]$ be a one-one PTA . The set $\Gamma(\TA,\psi)$ is solvable if $\psi$  is the form  $ \forall \Box \phi$.
\end{corollary}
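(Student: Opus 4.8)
The plan is to reduce the safety case $\forall\Box\phi$ to the reachability case $\exists\Diamond\phi$ already settled in Theorem~\ref{th:one-one-PTA}, by exploiting the standard duality between the two temporal modalities. First I would observe that the grammar for state formulas in Definition~\ref{def:prop} is explicitly closed under negation, so $\neg\phi$ is again a legitimate state formula and $\exists\Diamond\neg\phi$ is a legitimate reachability property to which the main theorem applies verbatim.

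Next I would establish the semantic duality that $\TA[\gamma]\models\forall\Box\phi$ holds if and only if $\TA[\gamma]\not\models\exists\Diamond\neg\phi$. This follows immediately from the LTS semantics of Definition~\ref{def:sem}: a parameter valuation $\gamma$ violates the safety property exactly when some reachable state of $\TA[\gamma]$ fails $\phi$, i.e. satisfies $\neg\phi$, which is precisely the reachability condition for $\exists\Diamond\neg\phi$. Consequently the two feasible regions are complementary, $\Gamma(\TA,\forall\Box\phi)=\NN\setminus\Gamma(\TA,\exists\Diamond\neg\phi)$.

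With the duality in hand the conclusion is essentially bookkeeping. By Theorem~\ref{th:one-one-PTA} the set $\Gamma(\TA,\exists\Diamond\neg\phi)$ is solvable; moreover, inspecting its proof (built on Lemma~\ref{lem:new1}) shows that this set admits a finite effective description, namely an explicitly computed subset of $\{0,\ldots,C-1\}$ together with a single bit recording whether the entire tail $\{C,C+1,\ldots\}$ is feasible. Taking the complement in $\NN$ merely flips membership in each of these finitely many cases, so the complement again has a finite effective description and is therefore solvable, yielding $\Gamma(\TA,\forall\Box\phi)$ explicitly.

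I expect the only delicate point to be this final complementation step, and specifically the need to argue that solvability is preserved under complement. The argument is clean here because the main theorem does not merely decide membership pointwise but returns the structured finite representation just described; since each individual check $\TA[p=i]\models\psi$ is decidable (in PSPACE, as noted after Problem~\ref{pro:synth}), flipping the tail bit and complementing the finite prefix incurs no decidability cost.
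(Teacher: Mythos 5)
Your proposal is correct and follows exactly the route the paper intends: the paper leaves Corollary~\ref{coro:new5} without an explicit proof, but its proof of the analogous Corollary~\ref{coro:new3} is precisely your argument, namely $\Gamma(\TA,\forall\Box\phi)=\NN\setminus\Gamma(\TA,\exists\Diamond\neg\phi)$ with the latter set computable by the reachability result. Your extra remark that complementation is harmless because Theorem~\ref{th:one-one-PTA} returns a finite prefix plus a tail bit is a welcome clarification the paper glosses over, but it is the same approach.
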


\section{Parameter  synthesis problem for L/U-automata}
\label{sec:newresult}
%

In this section, we will consider the parameter synthesis problem of  {\em L/U-automata} which defined in \cite{HUNE2002183} as given below.
\begin{definition}[L/U automata]
	Let $e=c_0+c_1p_1+\cdots+c_np_n$  be a linear expression. For $i=1,\cdots, n$, we say $p_i$ occurs  in $e$ if $c_i\neq 0$,  occurs positive in $e$ if $c_i>0$, and occurs negative in $e$ if $c_i<0$. 
	\begin{itemize}
		\item 	A parameter $p$ of PTA $\TA$ is a lower bound (or an upper-bound) parameter if it only occurs negative (resp. positive) in the expressions of $\TA$. 
		\item $\TA$ is called a lower-bound/upper-bound (L/U) automaton if  every parameter of $\TA$ is either a lower-bound parameter or an upper-bound parameter.	
	\end{itemize}
\end{definition}

For instance, $p_1$ is an upper bound parameter in $x-y< 2p_1$;  $p_2$ and 
$p_3$ are lower bound parameters in $y-x< -p_2- 3p_3$ and in $x-y< 2p_1-p_2-2p_3$. A PTA which contains both the constraints $x-y\le p_1-p_2$ and $z< p_2-p_1$ is not an L/U automaton.

\subsection{Parameter  synthesis for  L/U-automata}
Clearly,  the  parameters in a PTA $\mathcal{A}$ can be divided  into two $L(\mathcal{A})$ and $U(\mathcal{A})$ which are the  sets lower-bound parameters and upper-bound parameters, respectively\footnote{We simply use $L$ and $U$ when there is no confusion.}. For a parameter valuation $\gamma$ 
we use  $\gamma_l$ and $\gamma_u$ to denote its restrictions on $L$ and $U$, respectively.   The following proposition  in  \cite{HUNE2002183} is useful for us. 

\begin{proposition}
	\label{prop:monot}
	Let $\TA$ be an L/U automaton and $\phi$ a state formula. Then
	\begin{enumerate}
		\item  $\TA[\gamma_l{,}\gamma_u]{\models} \exists \Diamond \phi$ iff  $\forall \gamma_l'{<} \gamma_l{, }\gamma_u{<} \gamma_u'{:}\TA[\gamma_l'{,}\gamma_u']{\models}\exists \Diamond \phi$.
		\item $\TA[\gamma_l{,}\gamma_u]{\models} \forall \Box \phi$ iff $\forall \gamma_l{< }\gamma_l'{,} \gamma_u'{<} \gamma_u{:}\TA[\gamma_l'{,}\gamma_u']{\models}\forall \Box \phi$.
	\end{enumerate}
\end{proposition}

The proof of this proposition given in  \cite{HUNE2002183}  needs to extend  the notion of a parameter valuation to that of a  {\em partial parameter valuation}  which allow a parameter to be ``undefined''. We use $\infty$ to denote the undefined value. Thus, a partial valuation $\gamma$ assigns a   parameter with a  value in 
$\NN\cup \{\infty\}$, rather than in $\NN$ only. 

Partial  parameter valuations are useful in certain  cases to solve the verification problem. However partial parameter valuations may cause problems. For example,  if $\gamma[p_1]=\gamma[p_2]=\infty$, what would be the value of $\gamma(e_1-e_2)$? To avoid this problem, we require that a partial parameter  valuation does not assign  $\infty$ to both a lower-bound parameter and an upper-bound  parameter. Also we follow the conventions that the truth values of  $0\cdot \infty=0$, and $x- y\prec \infty$ 
are {\em true} and  the truth value of $x-y\prec -\infty$ is {\em false}. We use  $[0,\infty]$ to denote the valuation which assigns $0$ for each lower bound parameter and $\infty$ to each upper bound parameter. 

We now show that  the {\em emptiness problem} of  an L/U automaton can be reduced to  the reachability problem of  its corresponding timed automaton under parameter valuation $[0, \infty]$.

\begin{proposition}
	\label{prop:new1}
	Let $\TA$ be an L/U automaton and $\phi$ be a  state formula. Then $\TA[0,\infty]\models \exists \Diamond \phi$ if and only if 
	there exists a  parameter valuation and clock evaluation $\omega$  such that  $\TA[\gamma,\omega]\models \exists \Diamond \phi$.
\end{proposition}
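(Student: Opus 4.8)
The plan is to prove the two directions of the biconditional, where the nontrivial content lies entirely in the monotonicity structure already captured by Proposition \ref{prop:monot}. First observe that the backward direction is essentially immediate: if there is some parameter valuation $\gamma$ and clock evaluation $\omega$ with $\TA[\gamma,\omega]\models \exists \Diamond \phi$, then since $[0,\infty]$ is the extremal partial valuation assigning the smallest possible value ($0$) to every lower-bound parameter and the largest possible value ($\infty$) to every upper-bound parameter, we have $\gamma_l' \le \gamma_l$ and $\gamma_u \le \gamma_u'$ for the restrictions of $[0,\infty]$, in the ordering sense required. Invoking part (1) of Proposition \ref{prop:monot} — reading $\gamma$ as the witnessing valuation and $[0,\infty]$ as the weaker one — monotonicity of the existential reachability property then yields $\TA[0,\infty]\models \exists \Diamond \phi$.

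For the forward direction, I would argue that satisfaction under the idealized valuation $[0,\infty]$ can be witnessed by a \emph{concrete} run that uses only finitely many clock and parameter values, and then replace the $\infty$ entries by sufficiently large finite naturals. Concretely, if $\TA[0,\infty]\models \exists \Diamond \phi$, fix a syntactic run $\tau$ and a finite untimed run in $R(\TA_\tau[0,\infty])$ reaching a state satisfying $\phi$. This run is finite, so it involves only finitely many guard and invariant constraints; the only constraints affected by an upper-bound parameter set to $\infty$ are of the form $x \prec e$ where $e$ has a positive coefficient on that parameter, and by the stated convention such constraints with $\infty$ are vacuously \emph{true}. The key step is to choose a finite natural number $N$ large enough that substituting $N$ for each upper-bound parameter (in place of $\infty$) keeps all these finitely many upper-bound constraints satisfied along the run — this is possible precisely because each clock value appearing in the finite run is bounded, so a single uniform $N$ exceeding all relevant bounds works. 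Setting $\gamma$ to assign $0$ to each lower-bound parameter and this $N$ to each upper-bound parameter, and keeping the same $\omega$, preserves the witnessing run, giving $\TA[\gamma,\omega]\models \exists \Diamond \phi$.

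The main obstacle I anticipate is making the finiteness argument fully rigorous, namely justifying that a single finite $N$ suffices uniformly across all the upper-bound constraints encountered. The subtlety is that an upper-bound parameter $p$ may appear with different positive coefficients in different guards, and the clock against which it is compared may have grown along the run; one must bound all clock valuations occurring in the finite witnessing run (which is possible since the run has finite length and each time-advance is finite) and then pick $N$ exceeding the largest ratio $(\omega(x) - \textit{con}(e))/\CF(e,p)$ over the finitely many relevant constraints. A secondary technical point is handling lower-bound parameters correctly: setting them to $0$ can only \emph{weaken} lower-bound guards of the form $x \succ e$ (since $e$ has a negative coefficient, decreasing the parameter increases $e$, hence one must check the run still satisfies them) — but here one appeals again to the monotonicity of Proposition \ref{prop:monot}, which guarantees that the extremal valuation $[0,\infty]$ is the easiest to satisfy, so any run feasible under $[0,\infty]$ remains feasible after raising lower-bound parameters from $0$. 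Thus the reduction closes, and $\TA[0,\infty]\models \exists \Diamond \phi$ if and only if some concrete valuation satisfies the property.
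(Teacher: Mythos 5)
Your proposal is correct and follows essentially the same route as the paper: the ``if'' direction by monotonicity (Proposition~\ref{prop:monot}), and the ``only if'' direction by taking a finite witnessing run under $[0,\infty]$, bounding the clock values occurring in it, and replacing $\infty$ by a sufficiently large finite natural for the upper-bound parameters while keeping lower-bound parameters at $0$ (the paper's explicit choice is $T+|T'|+1$ with $T$ the maximal clock value in the run and $T'$ the smallest constant in $\TA$ and $\phi$). Your side remark about lower-bound parameters is moot since the witnessing run already has them at $0$, but you resolve it correctly.
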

\begin{proof}
	The ``if" part is an immediate consequence of {\em Proposition  \ref{prop:monot}}. For the ``only if" part, assume that $\xi$ is a run of $\TA[0,\infty]$ which  satisfies $\phi$.  Let $T$ the maximum clock value occurring in $\xi$ and $T'$ be the smallest constant occurring in $\TA$ and $\phi$. More precisely, if $\xi=(q_0{,}\omega_0){\xrightarrow{a_1}}(q_1{,}\omega_1){\xrightarrow{a_2}}{\cdots} {\xrightarrow{a_\ell}}(q_\ell{,}\omega_\ell)$, then
	$T=\max\{\omega_i(x)~|~ 0\le i\le \ell, x\in X\}$.     Let  $\gamma_l(p)=0$ for $p\in L$ and $\gamma_u(p)=T+|T'|+1$ for $p\in U$. The proposition can then be proven by considering the different possible cases of the location invariants and guards of the transitions in $\xi$. For example, assume    $g=x-y\prec e$  is the  invariant of a location  location $q_i$, or a conjunct of the guard of transition by $a_i$, or a conjunct of $\phi$.  The relation $\omega_i(x)-\omega_i(y)\prec e[\gamma_l,\gamma_u ]$ holds for the definition of $\gamma_l$ and $\gamma_u$.  Hence, $((\gamma_l, \gamma_u),\omega_i)\models g$.  Thus, $\xi$ is a run of $\TA[\gamma_l, \gamma_u]$  and   $\TA[\gamma_l, \gamma_u]\models \exists\Diamond \phi$.  \QEDB
\end{proof}
Proposition \ref{prop:new1} provides an algorithm to check the satisfaction of a property with existential quantifiers  by an L/U-automaton. Base on  the ``monotonic" property of L/U-automata,  this actually  reduces the {\em emptiness problem}   an L/U-automaton  to  the reachable problem of corresponding timed automaton.
\begin{lemma}
	\label{lem:up}
	For  a one-one  L/U PTA  $\TA[x,p]$ and a   formula   $\psi\equiv  \exists \Diamond \phi$,  if there exists $T\ge C$ such that
	 $\TA[p=T]\models \psi$,  then set  $\Gamma(\TA, \psi)$ is computable. 
\end{lemma}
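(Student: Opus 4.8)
The plan is to exploit the fact that, because $\TA[x,p]$ is an L/U automaton with the single parameter $p$, the definition of L/U automata forces $p$ to be \emph{either} a lower-bound parameter \emph{or} an upper-bound parameter, and I would treat these two cases separately. In both cases the engine is the combination of two facts already at our disposal: the ``stabilisation above $C$'' supplied by Lemma~\ref{lem:new1}, which says that the hypothesis $\TA[p=T]\models\psi$ for a single $T\ge C$ already yields $\TA[p=T']\models\psi$ for \emph{every} $T'\ge C$, so that $\{C,C+1,\dots\}\subseteq\Gamma(\TA,\psi)$; and the monotonicity of $\exists\Diamond$-properties for L/U automata recorded in Proposition~\ref{prop:monot}(1), which tells us in which direction feasibility propagates as $p$ varies. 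The point is that these two ingredients pin down the shape of $\Gamma(\TA,\psi)$ completely, leaving only finitely many concrete instances to test.

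First I would handle the case where $p$ is a lower-bound parameter (so $U=\emptyset$ and $\gamma_u$ is vacuous). Proposition~\ref{prop:monot}(1) then reads $\TA[p=v]\models\exists\Diamond\phi \Rightarrow \TA[p=v']\models\exists\Diamond\phi$ for every $v'<v$, i.e. $\Gamma(\TA,\psi)$ is downward closed. Since Lemma~\ref{lem:new1} gives $C\in\Gamma(\TA,\psi)$, downward closure forces every $v<C$ into $\Gamma(\TA,\psi)$ as well; together with $\{C,C+1,\dots\}\subseteq\Gamma(\TA,\psi)$ this yields the clean conclusion $\Gamma(\TA,\psi)=\NN$, which is trivially computable and indeed explicitly describable.

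Next I would treat the case where $p$ is an upper-bound parameter (so $L=\emptyset$). Here Proposition~\ref{prop:monot}(1) gives $\TA[p=v]\models\exists\Diamond\phi\Rightarrow\TA[p=v']\models\exists\Diamond\phi$ for every $v'>v$, so $\Gamma(\TA,\psi)$ is upward closed; being a nonempty upward-closed subset of $\NN$, it equals $\{v : v\ge p_0\}$ for a unique threshold $p_0$, and Lemma~\ref{lem:new1} guarantees $p_0\le C$. To make this effective I would test $\TA[p=i]\models\psi$ for $i=0,1,\dots$ and take $p_0$ to be the first index that succeeds; each such test terminates because $\TA[p=i]$ is a concrete timed automaton and verification of $\exists\Diamond\phi$ for it is decidable in PSPACE (as noted after Problem~\ref{pro:synth}), and the search stops by $i=C$ at the latest. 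This outputs $\Gamma(\TA,\psi)=\{v\in\NN : v\ge p_0\}$.

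I do not expect a genuine obstacle; the content is the observation that the L/U classification, plugged into Proposition~\ref{prop:monot}, upgrades the abstract computability of Theorem~\ref{th:one-one-PTA} into an \emph{explicit} interval description of $\Gamma(\TA,\psi)$ (either all of $\NN$, or a final segment $\{v\ge p_0\}$). The only points needing care are the degenerate valuations --- one of $L,U$ is empty, so one of $\gamma_l,\gamma_u$ is the empty function and the corresponding universal quantifier in Proposition~\ref{prop:monot}(1) is vacuous --- and the remark that the hypothesis ``there exists $T\ge C$'' is exactly what is required to invoke Lemma~\ref{lem:new1} and thereby obtain the infinite tail $\{C,C+1,\dots\}$ needed to close the lower-bound case.
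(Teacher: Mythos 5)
Your proof is correct, but it takes a more refined route than the paper's. The paper's own argument for Lemma~\ref{lem:up} does not use the L/U hypothesis at all: it simply invokes Lemma~\ref{lem:new1} to conclude $\{C,C+1,\dots\}\subseteq\Gamma(\TA,\psi)$ and then observes that the finitely many remaining valuations $T_1\in[0,C]$ can each be checked individually (each instantiation being a concrete timed automaton), so $\Gamma(\TA,\psi)$ is computable by enumeration --- essentially the same argument as the proof of Theorem~\ref{th:one-one-PTA}. You additionally bring in the monotonicity of Proposition~\ref{prop:monot} and split on whether $p$ is a lower-bound or an upper-bound parameter, which upgrades bare computability to an explicit description of the set: $\Gamma(\TA,\psi)=\NN$ in the L case, and a final segment $\{v\in\NN: v\ge p_0\}$ with $p_0\le C$ in the U case. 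This is precisely the analysis the paper defers to Proposition~\ref{prop:oneP}, so you have in effect anticipated that later proof; the cost is the case split and the reliance on Proposition~\ref{prop:monot}, the benefit is an interval-shaped answer rather than a finite table over $[0,C]$. Both arguments are sound, and your handling of the degenerate valuation (one of $L,U$ empty, making one quantifier in Proposition~\ref{prop:monot} vacuous) is a legitimate point of care that the paper glosses over.
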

\begin{proof}
	Suppose  there is a syntactic  run \[\tau=(q_0,I_{q_0})\xrightarrow{g_{1}\&a_1[u_{1}]}(q_1,I_{q_1})\cdots \xrightarrow{g_\ell\&a_\ell[u_\ell]}(q_\ell,I_{q_\ell})\] which  satisfies $\psi$ under the parameter valuation $p=T$.  According {\em Lemma \ref{lem:new1}},  $\TA[p=T_1]\models \psi$, for 
	$T_1\ge C$. Then, we can check wether $\TA[p=T_1]\models \psi$ for $T_1\in [0,C]$. Therefore,  set  $\Gamma(\TA, \psi)$ is computable. 
	
\QEDB
	 		
\end{proof}

\begin{proposition}
	\label{prop:oneP}
	For  a one-one L/U automaton $\TA[x,p]$ and state property $\psi$, the set $\Gamma(\TA, \psi)$ is computable for $\psi\equiv \exists \Diamond \phi$.
\end{proposition}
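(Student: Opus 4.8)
The plan is to reduce the computation of $\Gamma(\TA,\psi)$ to finitely many membership checks, each of which is a reachability question on a concrete timed automaton and hence decidable in PSPACE. The key observation is that Lemma \ref{lem:up} already delivers computability \emph{under the hypothesis} that some $T\ge C$ satisfies $\psi$; so all that remains is to decide that hypothesis and to dispose of the complementary case. I would decide it using Lemma \ref{lem:new1}: for $\psi\equiv\exists\Diamond\phi$ on a one-one PTA, either $\TA[p=T]\models\psi$ for every $T\ge C$ or for no such $T$, and this dichotomy is settled by the single check $\TA[p=C]\models\psi$.

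Concretely, I would first test whether $\TA[p=C]\models\psi$. If it holds, then $\{C,C{+}1,\dots\}\subseteq\Gamma(\TA,\psi)$ by Lemma \ref{lem:new1}, and Lemma \ref{lem:up} applies verbatim; if it fails, then by the same dichotomy $\Gamma(\TA,\psi)\cap[C,\infty)=\emptyset$, so $\Gamma(\TA,\psi)\subseteq\{0,1,\dots,C-1\}$. In either case the only undetermined values lie in the finite initial segment $\{0,1,\dots,C-1\}$, and I would resolve each of them by model checking the concrete automaton $\TA[p=v]$ against $\exists\Diamond\phi$. The union of the tail contribution (either $\emptyset$ or $[C,\infty)$) with the feasible points found below $C$ is then $\Gamma(\TA,\psi)$, computed explicitly. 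Here the L/U hypothesis enters through Proposition \ref{prop:monot}: since $p$ is a single parameter it is either a lower-bound or an upper-bound parameter, so $\Gamma(\TA,\psi)$ is respectively downward closed or upward closed; this lets me describe the region as an interval and, if desired, locate its threshold by a scan or binary search over $\{0,\dots,C\}$ rather than by an unstructured enumeration.

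The main obstacle, as I see it, is arguing rigorously that nothing feasible is missed by confining the infinite search to the single threshold $C$. This requires reconciling the stabilization supplied by Lemma \ref{lem:new1} with the monotonicity of Proposition \ref{prop:monot}: for an upper-bound $p$ one must check that failure at $p=C$ together with upward closure forces $\Gamma(\TA,\psi)=\emptyset$ (a feasible $v<C$ would, by monotonicity, make $C$ feasible, while every $v\ge C$ is excluded by Lemma \ref{lem:new1}), whereas for a lower-bound $p$ one must check that success at $p=C$ together with downward closure forces $\Gamma(\TA,\psi)=\NN$. Once these consistency facts are in place, the finitely many checks at $0,1,\dots,C-1$ and at $C$ determine the region exactly; the argument is, in effect, the $\exists\Diamond$ specialization of the scheme already used to prove Theorem \ref{th:one-one-PTA}, sharpened by the monotonic shape that the L/U structure guarantees.
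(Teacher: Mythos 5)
Your argument is correct, and it reaches the same conclusion by a slightly different decomposition than the paper's. The paper splits immediately on whether $p$ is a lower-bound or an upper-bound parameter: for a lower-bound $p$ it first tests $\TA[p=0]\models\psi$ and invokes Proposition~\ref{prop:new1} to conclude $\Gamma(\TA,\psi)=\emptyset$ on failure, then tests $p=C$ and scans $\{0,\dots,C-1\}$; for an upper-bound $p$ it first tests $\TA[p=\infty]\models\psi$, and on success extracts a witnessing run $\xi$ of $\TA[0,\infty]$ and derives the ad hoc threshold $T+|T'|+1$ (maximal clock value in $\xi$ plus the magnitude of the smallest constant plus one) above which all valuations are feasible, then scans upward from $0$. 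You instead use the single stabilization threshold $C$ uniformly in both cases via the dichotomy of Lemma~\ref{lem:new1} (either all of $[C,\infty)$ is feasible or none of it is, settled by one check at $p=C$), which makes the case analysis shorter and dispenses with the run-derived bound; the price is that you lean on Lemma~\ref{lem:new1}, the technical heart of Section~\ref{sec:oneP}, where the paper's upper-bound case gets by with the more elementary monotonicity of Proposition~\ref{prop:monot} and the construction in Proposition~\ref{prop:new1}. Both routes terminate with finitely many decidable checks $\TA[p=v]\models\psi$ and both recover, through Proposition~\ref{prop:monot}, the interval shape of $\Gamma(\TA,\psi)$ (one of $\emptyset$, $\NN$, $\{0,\dots,T\}$, $\{T,T+1,\dots\}$) that Theorem~\ref{prop:all} and Proposition~\ref{prop:connect} later rely on, so your version loses nothing that the paper needs downstream.
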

\begin{proof}
	Let $H$ be $\Gamma(\TA, \psi)$. Assume that  $p$ is a  lower bound parameter. First check whether  $\TA[p=0]\models \psi$ hold or not. Since $\TA[p=0]$ is a \taa, 
	this checking is decidable.
	If $\TA[p=0]\models \psi$ does not hold, then employ {\em Proposition \ref{prop:new1}}, $H=\emptyset$. Otherwise,  
	  if $\TA[p=C ]\models \psi$ does not hold, then employ {\em Lemma \ref{lem:up}},
	$H\subseteq\{0,1,\cdots, C-1\}$. We can check whether  $\TA[p=i]\models \psi$ holds or not from  $i=C-1$ to $i=0$ until formula holds,
	then $H=\{0,\cdots, i\}$.  	If $\TA[p=C]\models \psi$ holds, $H$  is solvable follows  from {\em Lemma \ref{lem:up}}.
	
	If $p$ is an upper parameter.  First check whether $\TA[p=\infty]\models \psi$ holds or not. Since $\TA[p=\infty]$ is a \taa, this checking is decidable.
	If $\TA[p=\infty]\models \psi$ does not hold,  $H=\emptyset$ follows from {\em Proposition \ref{prop:new1}}.
	Otherwise,	 assuming that $\xi$ is a run of 
	$\TA[0,\infty]$ that  satisfies $\phi$. 
	 Let $T'$ be the smallest constant occurring in $\TA$ and $\phi$. And let $T$
	 be the maximum clock value occurring in $\xi$. More precisely, if $\xi=s_0\xrightarrow{a_1}s_1\cdots \xrightarrow{a_\ell} s_\ell$ and $s_i=(q_i,\omega_i)$, then
	 $T=\max_{i\le \ell, x\in X}\omega_i(x)$
 It is easy to check that 
	$\{T+|T'|+1,\infty\}\subseteq H$.   We   iteratively check whether  $\TA[p=i]\models \psi$ holds or not from  $i=0$ to $i=T+|T'|+1$ until formula holds,
	then $H=\{i,i+1,\cdots, \infty\}$.  \QEDB
	
\end{proof}

For an  L/U automaton $\TA$ with one parameter $p$ and a property $\psi$,  the 
work in \cite{bozzelli2009decision} shows that  the complexity of  computing $\Gamma(\TA,\psi)$
is PSPACE-complete.  

\begin{corollary}
	\label{coro:new3}
	For  a one-one L/U PTA  $\TA[x,p]$  and state property $\psi$, the set $\Gamma(\TA, \psi)$ is computable for $\psi=\forall \Box \phi$.
\end{corollary}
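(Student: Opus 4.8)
The plan is to obtain this result from its reachability counterpart, Proposition \ref{prop:oneP}, by exploiting the standard duality between safety and reachability. Concretely, for every parameter valuation $\gamma$ one has $\TA[\gamma]\models \forall\Box\phi$ if and only if $\TA[\gamma]\not\models \exists\Diamond\neg\phi$, since a safety property fails exactly when some reachable state violates $\phi$, i.e.\ satisfies $\neg\phi$. Because the state-formula grammar of Definition \ref{def:prop} is closed under negation, $\neg\phi$ is again an admissible state formula, so $\exists\Diamond\neg\phi$ is a well-formed reachability property. Consequently $\Gamma(\TA,\forall\Box\phi)$ is precisely the complement, within the parameter domain, of $\Gamma(\TA,\exists\Diamond\neg\phi)$.

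First I would invoke Proposition \ref{prop:oneP} on the same one-one L/U automaton $\TA[x,p]$ with the reachability property $\exists\Diamond\neg\phi$; this yields an explicit description of $\Gamma(\TA,\exists\Diamond\neg\phi)$. By the monotonicity of Proposition \ref{prop:monot}, this set is an interval in the single parameter $p$: when $p$ is a lower-bound parameter it is downward closed, hence of the form $\{0,\ldots,k\}$ (or $\emptyset$, or all of $\NN$); when $p$ is an upper-bound parameter it is upward closed, hence of the form $\{k,k+1,\ldots,\infty\}$. Taking the complement in the corresponding domain ($\NN$, resp.\ $\NN\cup\{\infty\}$) then gives $\Gamma(\TA,\forall\Box\phi)$ explicitly as the complementary interval: upward closed $\{k+1,k+2,\ldots\}$ in the lower-bound case, and downward closed $\{0,\ldots,k-1\}$ in the upper-bound case. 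This orientation is exactly what Proposition \ref{prop:monot}(2) predicts for $\forall\Box$, and only the single threshold $k$ needs to be determined, which Proposition \ref{prop:oneP} already supplies. Hence $\Gamma(\TA,\forall\Box\phi)$ is computable.

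The one point that needs care, and which I expect to be the main obstacle, is whether applying Proposition \ref{prop:oneP} to $\neg\phi$ remains legitimate when $\phi$ itself mentions the parameter $p$: pushing the negation through an atomic guard such as $x<e$ turns it into $x\ge e$, which flips the sign with which $p$ occurs and could in principle spoil the lower-bound/upper-bound classification of the encoded system that Propositions \ref{prop:monot}--\ref{prop:new1} rely on. I would resolve this in one of two ways. If $\phi$ is parameter-free, so that all parameters live in the automaton $\TA$, then $\neg\phi$ is parameter-free as well, the automaton is untouched, and Proposition \ref{prop:oneP} applies verbatim. In the general case I would instead observe that a one-one L/U PTA is in particular a one-one PTA, so Corollary \ref{coro:new5} already guarantees that $\Gamma(\TA,\forall\Box\phi)$ is solvable; the L/U structure is then used only to sharpen the description to the single-threshold interval form above, via the monotonicity of Proposition \ref{prop:monot}.
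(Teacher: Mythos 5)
Your proposal takes exactly the paper's route: the paper defines $H_1$ as the set of valuations with $\TA[\gamma]\models\exists\Diamond\neg\phi$, notes $\NN = H\cup H_1$ with $H\cap H_1=\emptyset$, computes $H_1$ via Proposition~\ref{prop:oneP}, and returns $H=\NN\setminus H_1$, which is precisely your duality-and-complement argument. Your extra caveat about negation flipping the sign of $p$ in a parametric state formula $\phi$ is a legitimate subtlety the paper's one-line proof silently ignores, but it does not change the fact that the two arguments coincide.
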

\begin{proof}
	Let $H$ be $\Gamma(\TA,\psi)$ and  $H_1$ the set of parameter valuations which make 
	$\TA[\gamma]\models \exists \Diamond \neg \phi$ hold. It is easy to know that $\NN=H\cup H_1$ and $H\cap H_1=\emptyset$. By {\em Proposition \ref{prop:oneP}}, $H_1$	is computable, hence $H=\NN\setminus H_1$ is also computable.  \QEDB
\end{proof}

\begin{theorem}
	\label{prop:all}
	For 
 a L/U PTA  $\TA$ with  one parametrically
constrained clock, the  set $\Gamma(\TA,\psi)$    is computable if  $\psi\equiv \exists \Diamond \phi$ and all the parameters are lower bound parameter or  all the parameters are upper bound parameter.
\end{theorem}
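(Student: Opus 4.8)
The plan is to prove the two symmetric cases separately; I describe the case in which every parameter is a lower-bound parameter (an L-automaton), the upper-bound case being dual and handled as in the second half of the proof of \emph{Proposition~\ref{prop:oneP}} (using the valuation $[0,\infty]$ of \emph{Proposition~\ref{prop:new1}} in place of $0$). First I would invoke \emph{Proposition~\ref{prop:monot}}(1) to record that for an L-automaton the region $\Gamma(\TA,\exists\Diamond\phi)$ is \emph{downward closed}: if $\TA[\gamma]\models\exists\Diamond\phi$ and $\gamma'\le\gamma$ coordinatewise, then $\TA[\gamma']\models\exists\Diamond\phi$. It therefore suffices to pin down the upper boundary of $\Gamma$ and to show it is effectively computable.

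The heart of the argument is a multi-parameter version of \emph{Lemma~\ref{lem:new1}}. With $C=2\cdot\max\{\textit{maxC}(\TA),\textit{maxV}(\psi)\}+2$ as before, let $\widehat{\gamma}$ be the clamped valuation with $\widehat{\gamma}(p)=\min\{\gamma(p),C\}$ for each $p$. I would establish
\[
\TA[\gamma]\models\exists\Diamond\phi \iff \TA[\widehat{\gamma}]\models\exists\Diamond\phi .
\]
After the normalisations of \emph{Lemma~\ref{lem:moveP}} and \emph{Lemma~\ref{lem:moveI}}, fix any syntactic run $\tau$ witnessing $\exists\Diamond\phi$; then only its guards constrain the single parametric clock $x$. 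For a reset-free segment, \emph{Lemma~\ref{lem:noUpdate}} and \emph{Corollary~\ref{coro:up1}} reduce feasibility to a finite conjunction of comparisons $e^{LB}\le e^{UB}$ between the expression of an effective lower bound $\ELB(g_i,p)$ and that of an effective upper bound $\EUP(g_j,p)$ of $x$. In an L-automaton every parameter has a nonnegative coefficient in $e^{LB}$ and a nonpositive one in $e^{UB}$, so each comparison rearranges into
\[
\textstyle\sum_{k}\alpha_k\,p_k\;\le\;\beta,\qquad \alpha_k\in\NN,\ |\beta|<C .
\]
Since $\beta<C$, if any $p_k$ with $\alpha_k\ge 1$ already satisfies $p_k\ge C$ the inequality fails and clamping that coordinate to $C$ keeps it failing, while if all such $p_k$ lie below $C$ clamping changes nothing; hence every comparison, and therefore feasibility of $\tau$, has the same truth value at $\gamma$ and at $\widehat{\gamma}$. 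As this holds run by run, the displayed equivalence follows from the characterisation that $\TA[\gamma]\models\exists\Diamond\phi$ iff some such $\tau$ is feasible at $\gamma$. Runs that reset $x$ are removed exactly by the induction of \emph{Lemma~\ref{lem:new1}}: the reset value of $x$ is parameter-free, so deleting the last reset, applying the inductive hypothesis to the prefix, and treating the suffix as a reset-free segment propagates the clamping equivalence.

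Computability is then immediate. The clamped valuation ranges over the finite cube $\{0,1,\dots,C\}^{|P|}$, the value $C$ standing for ``$\ge C$''. For each of the $(C+1)^{|P|}$ clamped valuations $\delta$, the test $\TA[\delta]\models\exists\Diamond\phi$ is a reachability query about a concrete timed automaton, decidable in PSPACE by \cite{alur1994a}. Collecting those $\delta$ that pass into a finite set $B$ gives the explicit description
\[
\Gamma(\TA,\exists\Diamond\phi)=\{\gamma\in\NN^{|P|}\mid \widehat{\gamma}\in B\},
\]
a finite union of boxes, each unbounded exactly along the coordinates whose clamped value is $C$. The same clamping serves the upper-bound case, where the comparisons instead take the upward-closed form $\sum_k\alpha_k p_k\ge\beta$ with $\alpha_k\in\NN$ and $|\beta|<C$.

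I expect the clamping equivalence to be the main obstacle, and within it the reset handling to be the delicate point. One must verify that the single-parametric-clock hypothesis genuinely forces every comparison into the nonnegative-coefficient, $|\beta|<C$ form, and in particular control the constants introduced at resets: a comparison $b<e^{UB}(\gamma)$ created when $x$ is reset to $b$ is preserved under clamping only if $b$ is itself bounded by the threshold. I would address this by taking $C$ large enough to dominate all reset constants as well (equivalently, folding the reset values into $\textit{maxC}(\TA)$), so that $|\beta|\le b+|\textit{con}(e^{UB})|<C$ and the reset induction of \emph{Lemma~\ref{lem:new1}} goes through unchanged. The remaining ingredients—monotonicity and the finite enumeration—are routine given the earlier results.
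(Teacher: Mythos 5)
Your proposal is sound in outline but takes a genuinely different route from the paper. The paper's proof never proves a multi-parameter cutoff: it collapses all parameters of $\TA$ into a single parameter $p$, applies \emph{Proposition~\ref{prop:oneP}} to the resulting one-parameter L/U automaton to obtain a threshold $T$ with $H'=\{0,\dots,T\}$ (resp.\ $\{T,T+1,\dots\}$), uses \emph{Proposition~\ref{prop:monot}} to conclude that every feasible valuation must have some coordinate $\le T$ (resp.\ that the cone $\{\gamma\mid \gamma(p_i)\ge T\ \forall i\}$ is feasible), and then recurses by fixing one parameter $p_i$ to each value $j\le T$ and solving the induced $(m-1)$-parameter instance, writing $\Gamma$ as the union of the lifted slices $H_{ij}$. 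You instead prove (a sketch of) a direct multi-parameter clamping lemma $\TA[\gamma]\models\psi\iff\TA[\widehat\gamma]\models\psi$ and enumerate the cube $\{0,\dots,C\}^{|P|}$. Your route buys a flat algorithm and an explicit closed form $\Gamma=\{\gamma\mid\widehat\gamma\in B\}$, at the cost of redoing the quantitative analysis of \emph{Lemma~\ref{lem:new1}} for several parameters; the paper's route buys reuse of the already-proved one-parameter results and monotonicity, at the cost of an implicit induction on $|P|$ and a larger case analysis. One caveat in your write-up: $\ELB(g,p)$ and $\EUP(g,p)$ are defined via the order $\sqsupseteq_p$, which singles out an ``effective'' bound only when there is one parameter; with several parameters the binding bound is valuation-dependent, so \emph{Lemma~\ref{lem:noUpdate}} and \emph{Corollary~\ref{coro:up1}} cannot be invoked verbatim. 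This is repairable without changing your argument --- take the conjunction of comparisons over \emph{all} pairs consisting of a lower bound in $g_i$ and an upper bound in $g_j$ for $i\le j$ (plus the reset value against each upper bound), each of which still has the form $\sum_k\alpha_k p_k\le\beta$ with $\alpha_k\ge 0$ and $\beta<C$ in the L-case --- but it should be stated that way rather than through the single-parameter effective bounds. Your observation that $C$ must also dominate the reset constants is correct and is genuinely needed in the U-case, where the comparison $b\le e^{UB}$ yields $\beta=b-\textit{con}(e^{UB})$.
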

\begin{proof}
	Let $H=\Gamma(\TA,\psi)$. 
	If  all the parameters are lower bound parameter, we construct a PTA $\TA'$  from $\TA$ by  replacing all the parameters of  $\TA$ by the single  parameter $p$. 
	Then $\TA'$ is be an L/U automaton with one  parameter $p$. Employing {\em Proposition \ref{prop:oneP}}, we can compute a  set $H'=\Gamma(\TA',\psi)$. \ly{When $H'=\emptyset$,  by {\em Proposition \ref{prop:monot}}, $(0,\cdots,0)\in H$ if $H\neq \emptyset$. Hence $H=\emptyset$.} When $H'=\NN$, i.e.,   after setting each parameter of $\TA$ to $\infty$, $\psi$ also holds. Hence, $H=\NN^m$. Otherwise, by the 
        {\em Proposition \ref{prop:oneP}}, there is a $T\ge 0$ such that $H'=\{0,1\cdots, T\}$. 
	Assuming that there is   parameter valuation $\gamma$ such that $\gamma(p_i)\ge T+1$ for all $i=1,\cdots, m$ such that $\TA[\gamma]\models 
	\psi$,
	by the {\em Proposition \ref{prop:monot}}, $(T+1,\cdots, T+1)\in H$ and $T+1\in H'$,	which contracts with $H'=\{0,\cdots, T\}$. Hence, the assumption does not hold and there exits at
	least one component of $\gamma$ which less or equal than  $T$ for each $\gamma \in H$. Let $\TA'_{ij}$ be a PTA which obtain from set parameter $p_i$
	to  $j$.  Let $H'_{ij}$ be set of parameter valuation 
	such that  $ \TA_{ij}'[\gamma]\models \psi$. We lift $H_{i,j}'$ to $H_{i,j}$ by
	let the $i$-th of $H_{i,j}$ be $j$ and other components be the same as $H_{i,j}'$. In other words,
		\[H_{ij}=\{\gamma\mid \left(\gamma[p]=\gamma_1[p] , p\neq p_i\right)\wedge \left(\gamma(p_i)=j\right),  \gamma_1\in H_{ij}'   \}.\]
	 Then $H=\bigcup_{i=1}^m\bigcup_{j=0}^{T}H_{ij}$.
	
	If  all the parameters are upper bound parameter, we construct  a PTA $\TA'$  from $\TA$ by replacing all the parameters of $\TA$ by the single parameter $p$.  Then $\TA'$ is
	an L/U automaton with one parameter. Let $H'$ be $\Gamma(\TA',\psi)$. Following  {\em Proposition \ref{prop:oneP}}, we can compute
	$H'$. When $H'=\emptyset$.  From {\em Proposition \ref{prop:monot}}, $(\sum_{i=1}^m \gamma(p_i),\cdots,\sum_{i=1}^m \gamma(p_i))\in H'$ for each $\gamma\in H$. Hence $H=\emptyset$. \ly{When $H'=\NN$, in the other words,  after setting each parameter of $\TA$ to $0$, $\TA[p=0]\models \psi$ also holds.  Hence, $H=\NN^m$.} If $H'\neq \empty $ and $H'\neq \NN$, by the {\em Proposition \ref{prop:oneP}}, there is a $T>0$ such that $H'=\{T,T+1,\cdots,\infty\}$.  By the definition of $H'$, $(T,\cdots, T)\in H$, therefore, $\{\gamma\mid \gamma(p_i)\ge T, i=1,\cdots,m\}\subseteq H$. Let $\TA'_{ij}$ be a PTA which obtain from set parameter $p_i$
	by $j$. 
	Let $H'_{ij}$ be $\Gamma(\TA'_{ij},\psi)$. We lift $H_{i,j}'$ to $H_{i,j}$ by
		\[H_{ij}=\{\gamma\mid \left(\gamma[p]=\gamma_1[p] , p\neq p_i\right)\wedge \left(\gamma(p_i)=j\right),  \gamma_1\in H_{ij}'   \}.\]
	  Then $H=\left(\bigcup_{i=1}^m\bigcup_{j=0}^{T-1}H_{ij}\right)\cup \{\gamma\mid \gamma(p_i)\ge T, i=1,\cdots,m\}$, since there is at least one component of $\gamma$ less than $T$ excpet $\{\gamma\mid \gamma(p_i)\ge T, i=1,\cdots,m\}$. \QEDB

\end{proof}

\begin{remark}
When all the parameters are lower-bound parameter or all the parameters are upper-bound parameter,
the authors in \cite{bozzelli2009decision}  provide a method to  compute the explicit representation of the set of parameter valuation for
which there is a corresponding infinite accepting run of the automaton. Our  result
is concerning on more general   
properties.
\end{remark}
The following corollary directly follows Theorem~\ref{prop:all}. 
\begin{corollary}
	\label{coro:new4}
	For an  L/U automaton $\TA$ with  one parametrically
constrained clock and a propery  
	$\psi\equiv \forall \Box \phi$, the  set $\Gamma(\TA,\psi)$ is computable if all the parameters are lower-bound parameters or  all the parameters are upper-bound parameters.
\end{corollary}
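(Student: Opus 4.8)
The plan is to obtain this corollary as a direct dual of Theorem~\ref{prop:all}, mirroring exactly how Corollary~\ref{coro:new3} was derived from Proposition~\ref{prop:oneP}. The starting point is the standard temporal duality $\forall \Box \phi \equiv \neg \exists \Diamond \neg\phi$: for any parameter valuation $\gamma$ we have $\TA[\gamma]\models \forall \Box \phi$ if and only if $\TA[\gamma]\not\models \exists \Diamond \neg\phi$. Writing $H=\Gamma(\TA, \forall \Box \phi)$ and $H_1=\Gamma(\TA, \exists \Diamond \neg\phi)$, this yields the partition $\NN^m = H \cup H_1$ with $H\cap H_1=\emptyset$, so that $H = \NN^m\setminus H_1$, and it suffices to show that $H_1$ is computable and that its complement can be constructed.

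First I would check that $\neg\phi$ is again an admissible state formula; this is immediate from Definition~\ref{def:prop}, whose grammar for $\phi$ is closed under negation. Next I would observe that whether a parameter is a lower-bound or an upper-bound parameter is a purely \emph{syntactic} property of the linear expressions of $\TA$, entirely independent of the property being verified. Hence $\TA$ remains an L/U automaton with a single parametrically constrained clock whose parameters are all lower-bound (resp.\ all upper-bound) when we replace $\phi$ by $\neg\phi$. Therefore the hypotheses of Theorem~\ref{prop:all} are satisfied for the pair $(\TA, \exists \Diamond \neg\phi)$, and that theorem delivers an explicit, computable representation of $H_1$.

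The final step is complementation: $H = \NN^m\setminus H_1$ is computable because $H_1$ is produced by Theorem~\ref{prop:all} as a finite union of explicit regions, namely liftings of initial segments $\{0,\ldots,T\}$ along single coordinates together with, in the upper-bound case, an upward-closed set of the form $\{\gamma\mid \gamma(p_i)\ge T,\ i=1,\ldots,m\}$. These are semilinear subsets of $\NN^m$, which are effectively closed under complement, so an explicit description of $\NN^m\setminus H_1$ can be computed directly. I do not expect a genuine obstacle here, since the result is a corollary; the only point deserving care is to confirm that the output of Theorem~\ref{prop:all} is a \emph{decidable} (rather than merely recursively enumerable) set, so that its complement is again computable — and this holds precisely because that construction returns finitely many explicitly described regions.
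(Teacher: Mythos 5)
Your proposal is correct and matches the paper's (implicit) argument: the paper states that Corollary~\ref{coro:new4} ``directly follows Theorem~\ref{prop:all},'' and the intended derivation is precisely the complementation via $\forall\Box\phi \equiv \neg\exists\Diamond\neg\phi$ that the paper spells out for the one-parameter analogue, Corollary~\ref{coro:new3}. Your added care about $\neg\phi$ remaining a state formula and about the L/U classification of parameters being independent of the property is sound and fills in exactly what the paper leaves unsaid.
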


\section{A learning algorithm for L/U automata }
\label{sec:furth}

Corollary~\ref{coro:new4} only applies to   either L-PTAs or  U-PTAs. We intend to  tackle a more general class of L/U PTAs, for which the parameter synthesis problem is known unsolvable \cite{jovanovic2015integer}. In this section, we, instead sacrifice the completeness of the algorithm to computer the exact set $\Gamma(\TA,\psi)$ of the feasible parameter valuations,  propose a learning based approach to identify the boundary of the region $\Gamma(\TA,\psi)$. 

\subsection{Connectedness of the  region  $\Gamma(\TA,\psi)$}

In what follows, we will show a topological property for $\Gamma(\TA,\psi)$ where $\TA$ is an L/U automaton and the property has the form $\exists \Diamond \phi$. In general for a point in the $m$ dimensional space $\NN^m$, use $v(i)$ to denote $i$th dimension  $v$ and $|v-v'|$ to denote the distance between $v$ and $v'$.

\begin{proposition}
	\label{prop:connect}
	For two  $\gamma_1$ and $\gamma_2$ two feasible parameter valuations of  $\TA$  and  $\psi = \exists \Diamond \phi$,  there exists a sequence of lattice points $v_0,\cdots,v_k$ which are feasible parameters for $\TA$ and $\psi$ such that $\gamma_1=v_0,\gamma_2=v_k, |v_i-v_{i-1}|=1$. \end{proposition}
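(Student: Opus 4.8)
The plan is to exploit the monotonicity of the feasible region for existential properties, established by Proposition~\ref{prop:monot}, and reduce the connectedness question to building monotone lattice paths toward a single ``maximally relaxed'' valuation. Write the parameters as the disjoint union $L\cup U$ of lower-bound and upper-bound parameters, and say that $\gamma$ is a \emph{relaxation} of $\gamma'$, written $\gamma'\preceq\gamma$, if $\gamma(p)\le\gamma'(p)$ for every $p\in L$ and $\gamma(p)\ge\gamma'(p)$ for every $p\in U$. Read coordinatewise and non-strictly, Proposition~\ref{prop:monot}(1) says exactly that $\Gamma(\TA,\psi)$ is upward closed under $\preceq$: if $\gamma'$ is feasible and $\gamma'\preceq\gamma$, then $\gamma$ is feasible. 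Indeed, a concrete run witnessing $\TA[\gamma']\models\exists\Diamond\phi$ uses only guards, invariants, and atoms of $\phi$, every constraint of which is relaxed when lower bounds are lowered and upper bounds are raised, so the very same run witnesses $\TA[\gamma]\models\exists\Diamond\phi$.

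First I would form the common relaxation $\gamma^*$ of $\gamma_1$ and $\gamma_2$ defined coordinatewise by $\gamma^*(p)=\min(\gamma_1(p),\gamma_2(p))$ for $p\in L$ and $\gamma^*(p)=\max(\gamma_1(p),\gamma_2(p))$ for $p\in U$. By construction $\gamma_1\preceq\gamma^*$ and $\gamma_2\preceq\gamma^*$, and $\gamma^*\in\NN^m$ since it is obtained by taking minima and maxima of natural numbers. As $\gamma_1$ is feasible and $\Gamma(\TA,\psi)$ is upward closed, $\gamma^*$ is feasible; likewise from $\gamma_2$.

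Next I would connect $\gamma_1$ to $\gamma^*$ by a monotone unit-step path. Enumerate the coordinates and change them one at a time, each move by $\pm 1$, driving the current value monotonically from $\gamma_1(p)$ to $\gamma^*(p)$: lower-bound coordinates are only decreased, upper-bound coordinates only increased. This yields a lattice sequence $\gamma_1=w_0,w_1,\ldots,w_r=\gamma^*$ with $|w_i-w_{i-1}|=1$ for each $i$. At every intermediate point $w_i$, each lower-bound coordinate lies between $\gamma^*(p)$ and $\gamma_1(p)$, hence is $\le\gamma_1(p)$, and each upper-bound coordinate is $\ge\gamma_1(p)$, so $\gamma_1\preceq w_i$; by upward closure $w_i$ is feasible. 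The identical construction run from $\gamma_2$ produces a feasible unit-step path $\gamma_2=u_0,\ldots,u_s=\gamma^*$. Reversing the second path and concatenating with the first gives the required feasible lattice path $\gamma_1=v_0,\ldots,v_k=\gamma_2$ with consecutive distances equal to $1$.

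The only genuinely delicate point is the passage from the strict-inequality phrasing of Proposition~\ref{prop:monot} to the non-strict, coordinatewise upward closure I use to certify feasibility of $\gamma^*$ and of every intermediate $w_i$. I would secure this by the run-preservation argument sketched above, the same concrete run survives any relaxation of the constraints, which is precisely the mechanism already behind Proposition~\ref{prop:new1}. Everything else is bookkeeping: checking that a unit change in a single integer coordinate gives distance $1$ and that all constructed points remain in $\NN^m$.
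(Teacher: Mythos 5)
Your proof is correct, but it takes a genuinely different and more direct route than the paper. The paper argues by induction on the number $m$ of parameters: the base case appeals to the explicit interval structure of $\Gamma(\TA,\psi)$ from Proposition~\ref{prop:oneP}, the all-upper-bound case pushes both valuations up to a common point $(T,\dots,T)$ with $T$ the larger of the coordinate sums, and the remaining case picks a lower-bound parameter $p_i$, walks both valuations down to $p_i=0$, and recurses on the projected automaton $\TA[\{p_i=0\}]$. You instead dispense with induction entirely: you form the least common relaxation $\gamma^*$ (coordinatewise $\min$ on $L$, $\max$ on $U$), observe that $\Gamma(\TA,\psi)$ is upward closed under the relaxation order, and connect each $\gamma_j$ to $\gamma^*$ by a monotone unit-step lattice path whose every intermediate point is itself a relaxation of $\gamma_j$ and hence feasible. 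This buys several things: it is shorter; it yields a tighter meeting point than the paper's $(T,\dots,T)$; and it sidesteps a weakness in the paper's base case, which invokes Proposition~\ref{prop:oneP} --- a statement about \emph{one-one} PTAs (one parametrically constrained clock) --- even though Proposition~\ref{prop:connect} is meant for general L/U automata. Both arguments ultimately rest on the same engine, the monotonicity of $\exists\Diamond\phi$ under relaxation, and both need it in the non-strict, coordinatewise form rather than the strict form in which Proposition~\ref{prop:monot} is literally stated (the paper uses this silently when it claims $(T,\dots,T)\in H$); you are right to flag this and to justify it by run preservation, the same mechanism as in Proposition~\ref{prop:new1}. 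The one residual caveat, which applies equally to the paper's proof, is that run preservation under relaxation is only immediate when the parameters occurring in $\phi$ itself respect the L/U polarity (negations in $\phi$ can invert it), but that is an issue with Proposition~\ref{prop:monot} as used throughout the paper, not with your argument specifically.
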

We say  the sequence $v_0,\cdots,v_k$ in the proposition  {\em connects} $\gamma_1$ and $\gamma_2$.

\begin{proof}
Let us use $H$ to denote  the feasible region $\Gamma(\TA, \psi)$, and we make the proof  by induction on  the number $m$ of parameters. 
\begin{itemize}
\item For $m=1$, we use  {\em Proposition \ref{prop:oneP}}. If  $H$ is not empty, it can be one of the three sets $\NN$, $\{0,1,\cdots,T\}$ or $\{T,T+1,\cdots\}$. It is clearly that for any two feasible valuations in either these three sets we can find a sequence of the points satisfying the conditions in the proposition.
\item Assuming that the proposition holds for all $m\le M$.  
\item We need to prove the proposition holds for $m=M+1$. We divide the proof into two cases when $\TA$ has  no lower bound parameter, and when it has lower bound parameters. 
\begin{itemize}
\item If $\TA$ has  no  lower bound parameters, let $T=\max\{ \sum_{i=1}^m \gamma_1(i),\sum_{i=1}^m \gamma_2(i) \}$. According to  Proposition \ref{prop:monot},  we know $(T,\cdots, T)\in H$. We  repeatedly  add $1$ to  $\gamma_1$'s $i$-th item  until $\gamma_1(i)=T$ and repeatedly  add $1$ to    $\gamma_2(i)$ until $\gamma_2(i)=T$ for $i= 1, \ldots, m$.  This procedure generates the sequence of points we search for feasible parameter valuations. 
\item If $\mathcal{A}$ has lower bound parameters, let $p_i$ be a lower bound parameter of $\mathcal{A}$. We generate a sequence $s_1$ of points by repeatedly decrementing $\gamma_1(i)$ and $\gamma_2(i)$ by $1$, respectively, until $\gamma_1(i)=0$ and $\gamma_2(i) =0$. We use   $\gamma_1^{p_i=0}$ and  $\gamma_2^{p_i=0}$ to denote  two points in the $M$-dimensional parameters space obtained from $\gamma_1$ and $\gamma_2$ by removing the dimension for  $p_i$, respectively. Let $\TA'=\TA[\{p_i=0\}]$. It is easy to know that $\gamma_1^{p_i=0},\gamma_2^{p_i=0}\in \Gamma(\TA',\psi)$. By the induction assumption, the proposition holds for $m=M$ and exits a sequence $s_3$ connect $\gamma_1^{p_i=0}$ and $\gamma_2^{p_i=0}$. Then it is easy to see $s_1s_3s_2$
	is a sequence which connects original $\gamma_1$ and $\gamma_2$.  \QEDB
\end{itemize} 
 \end{itemize}

\end{proof}
Proposition \ref{prop:connect} says that the $\Gamma(\TA, \psi)$ for PTA $\TA$ and property $\psi::=\exists \Diamond \phi$ is a single ``single connected" set.  Informally,  as the $\Gamma(\TA, \psi)$
only consider lattice points, the meaning of ``single connected" set  is that each pair points $(v,v')\in \Gamma(\TA, \psi)$ can connect by  near points in $\Gamma(\TA, \psi)$.

\subsection{A learning based algorithm}
 As we have seen from the previous section,   $\Gamma(\TA,\psi)$ is a   {\em connected  set} when  $\psi$ contains existential quantifiers only. 
 We treat  the problem of identifying the boundary of  $\Gamma(\TA, \psi)$ as 
 the problem of two-class classification in machine learning  where a {\em decision surface} (or {\em decision boundary})  is computed to separate the feasible and infeasible parameter valuations. 
 To this end, we design an algorithm which combines the geometric concepts learning algorithm proposed in \cite{sharma2012interpolants} and binary classifier support-vector machine  (SVM).
 Intuitively, a   parameter valuation is ``good'' if it is feasible for the given PTA and the given properties and a  ``bad'' parameter valuation, otherwise. We describe the algorithm as follows.

 \noindent \textbf{\emph{Step 1}} (\textbf{\emph{Initial Parameter Generation}}): 
    A Monte Carlo method is  used to repeatedly generate a pair sets of good and bad parameter valuations $G_i,B_i$ in $i$-th round where   $G_i=\{g_{i1}, \ldots, g_{in_{i}}\}$ and $B_i=\{b_{i1}, \ldots, b_{in_{i}}\}$ be the set of good and bad points, respectively, and $A_i=G_i\cup B_i$.
For the case of  2-dimension, the good points $G_i$ and bad points $B_i$ are illustrated  in Fig. \ref{fig:svm}.
         
          \begin{figure}[h!]
          	\centering
          	\includegraphics[width=0.33\textwidth]{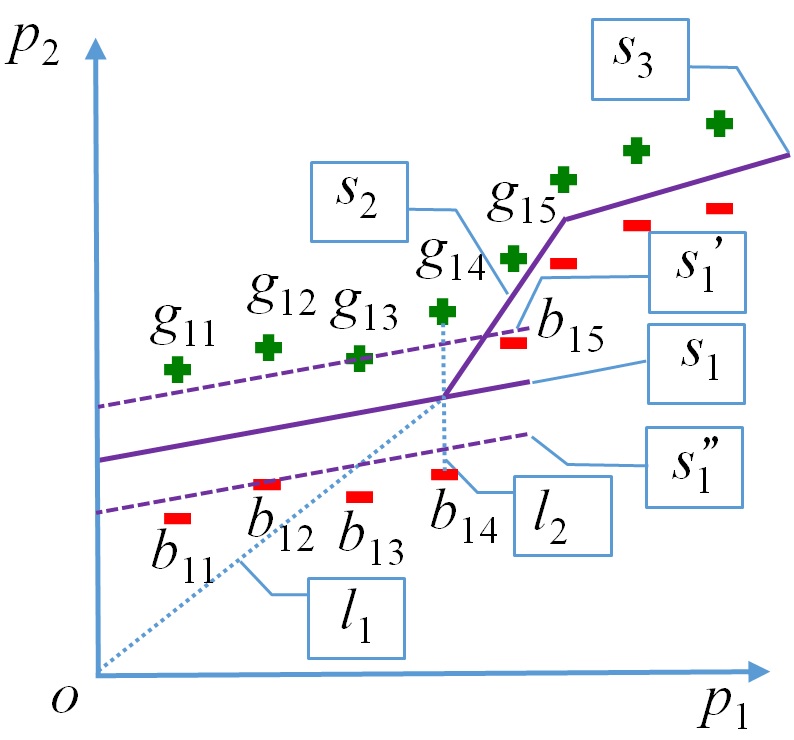}
          	\caption{SVM. \label{fig:svm}}
          \end{figure}

\noindent \textbf{\emph{Step 2}} (\textbf{\emph{Classification by SVMs}}): 

We create a bipartite graph $G$ where good points lie above the  bad ones lie in lower and each edge, such as  $l_2$ in Fig.~\ref{fig:svm},  connects  a pair of good and bad points. 

\smallskip

\noindent \emph{Step 2.1}: In the $i$-th round, for the points in the set  $A_i$, a SVM is used  to maximize the distance between the nearest training data points, and computes the maximum margin classification between the good and bad points.   For the case of 2-dimension,  Fig. \ref{fig:svm} shows the separation between the good and bad points in $A_1$ as follows
\begin{itemize}
\item $s_1$ is the \emph{maximum-margin hyperplane}; 
\item the distance between  lines $s_1'$ and  $s_1''$ is the \emph{maximum-margin}; and 
\item the points, such as  $g_{1,3}$ and $b_{1,2}$, which  lie on line $s_1'$ or $s_1''$ are the  {\em support vectors}. 
\end{itemize}
Thus, we have obtained  the maximum-margin hyper-plane $s_1$ as the separation boundary (i.e. decision surface), i.e. $s_1'$ and $s_1''$ have the maximum-margin.  According to the property of SVM, the maximum-margin ensures that  the separation that has the highest generalization ability.
The  learning process of SVM also has  a strategy that if there does not exists a single hyperplane to  separate all good and bad points in $A_1$, the points with smaller indices  are separated before the points  with larger indices.

\smallskip

\noindent \emph{Step 2.2}: We represent $A_i$ as the array $A_i=((g_{i1},b_{i1}), \ldots,  (g_{i n_i},b_{i n_i}))$. If single hyperplane  cannot be  found in \emph{Step 2.1} to separate $G_i$ and $B_i$ in $A_i$, we check through  the vector $A_i$ from the left to the right to find the first pair, say $(g_{ik}, b_{ik})$   that cannot be separated (e.g. $(g_{1,5}, b_{1,5})$ in Fig.~\ref{fig:svm}). We call the pairs before $(g_{ik}, b_{ik})$ {\em covered} (by the checking process)~\cite{sharma2012interpolants}, and denote it as $C_i$, and the rest pairs of $A_i$ are {\em uncovered} and denoted as $U_i$. For example, $C_1=\{ g_{11}, g_{12}, g_{13}, g_{14}, b_{11}, b_{12}, b_{13}, b_{14} \}$ in  Fig~\ref{fig:svm}. We take the hyperplane (e.g. $s_1$ in Fig.~\ref{fig:svm}) that separates $C_i$ as a {\em boundary segment} for separating the good and bad points of $A_i$.  Then we set the array $A_i$ as the $U_i$ and go back the SVM process in  \emph{Step 2.1}. If a hyperplane is found to separate these pairs, add the hyperplane to the boundary segments which have found, and repeat the process, otherwise.
	
\smallskip
   
\noindent \textbf{\emph{Step 3}} (\textbf{\emph{Continuous Parameters Generation and Classification}}): In the same way as  the boundary is obtained by  previous two steps, we generate a new set of pairs of good and bad points in the  new round.  
Each good or bad point is generated near the  boundary generated by the end of {\em Step 2} with the distance of the point to its nearest hyperplane being a given margin $w$  ($w$ can be assigned to be 1 in consideration of the integer-related feature of our problem). 
The algorithm repeats \emph{Step 2} to generate refined  boundaries until reaching a given number of iterations or meeting some given criteria for  termination.

\section{Conclusion}
    \label{sec:con}
We have studied the parametric synthesis problem for   parametric timed automata.
We  
have provided an algorithm to construct the feasible parameter region when PTA with one parametric clock and 
one parameter.
 We have proved  that, if PTA is restricted to be with only lower-bound
    or upper-bound parameters, the   parametric synthesis problem  is solvable. 
   Furthermore,  we have shown  that the feasible parameter region of more general L/U automata is a ``single connected" set for a property  which contains existential quantifiers only. 
     Aided by this result, we have presented  a SVM based method to compute the boundaries  of  feasible parameter regions.  
     
     In the further,  in the theorem phase, we will extend decidable result of parameter synthesis problem in PTA with one parametrically constrained clock and many paramters. In the algorithm phase,  we will give the experience result of our algorithm in some test cases.

\section*{Acknowledgements}
The authors would like to thank Andr{\'e} {\'E}tienne   who give   us
many meaningful suggestions.

\bibliographystyle{IEEEtran}
\bibliography{IEEEabrv,ptm}

\end{document}